\documentclass[11pt]{article}
\usepackage[margin=1in]{geometry}
\usepackage[utf8]{inputenc}
\usepackage{outlines}
\usepackage{amsmath}
\usepackage{amsthm}
\usepackage{amssymb}
\usepackage{amsfonts}
\usepackage{bbm}
\usepackage{comment}
\usepackage{cite}
\usepackage[table, dvipsnames]{xcolor}
\usepackage{graphicx}
\usepackage{capt-of}
\usepackage{mathtools}
\usepackage[font=small]{caption}
\usepackage[labelformat=simple]{subcaption}

\usepackage{mathdots}
\usepackage{algorithm}
\usepackage{algpseudocode}
\usepackage{enumitem}
\usepackage[normalem]{ulem}
\usepackage{physics}
\usepackage{slashed}
\usepackage{braket}
\usepackage{nccmath}
\usepackage{float}
\usepackage{dcolumn}
\usepackage{tikz}
\usepackage{complexity}
\usetikzlibrary{quantikz2}
\usepackage{bm}
\usepackage{array}
\usepackage{color}
\usepackage{wrapfig}
\usepackage{microtype}
\usepackage{empheq}

\usepackage{hyperref}
\hypersetup{
    colorlinks=true,
    linkcolor=blue,
    filecolor=magenta,      
    urlcolor=cyan,
    citecolor=cyan,
    pdftitle={Quantum Kravchuk Transform},
    pdfpagemode=FullScreen,
}
\usepackage{todonotes}
\usepackage{tablefootnote}
\usepackage{thmtools}
\usepackage{thm-restate}
\usepackage[noabbrev,capitalise]{cleveref}
\usepackage{tabularx}

\theoremstyle{plain}
\newtheorem{thm}{Theorem}

\newtheorem{corr}[thm]{Corollary}

\newtheorem{defn}[thm]{Definition} 
\newtheorem{lem}[thm]{Lemma} 
\Crefname{lem}{Lemma}{Lemmas}
\crefname{lem}{Lemma}{Lemmas}
\Crefname{fact}{Fact}{Facts}
\crefname{fact}{Fact}{Facts}
\Crefname{equation}{Eq.}{Eqs.}
\crefname{equation}{eq.}{eqs.}
\AddToHook{cmd/appendix/before}{\crefalias{section}{appendix}}

\theoremstyle{definition}

\newtheorem{remark}{Remark}

\let\originalleft\left
\let\originalright\right
\renewcommand{\left}{\mathopen{}\mathclose\bgroup\originalleft}
\renewcommand{\right}{\aftergroup\egroup\originalright}

\newcommand{\Z}{\mathbb{Z}}

\usepackage{authblk}
\newcommand{\QFT}{\operatorname{QFT}}
\newcommand{\IQFT}{\operatorname{QFT}^{\dagger}}

\renewcommand{\thefootnote}{\fnsymbol{footnote}}

\title{Quantum Kravchuk Transform using $\mathfrak{su}(2)$ fast-forwarding}
\author[1]{Chaowen Guan}
\author[2]{Akshit Katiyar}
\affil[1]{Department of Computer Science, University of Cincinnati, OH\vspace{2pt}}
\affil[2]{Department of Computer Science, Pennsylvania State University, University Park, PA\vspace{2pt}}
\date{\today}

\begin{document}
\renewcommand{\thefootnote}{\fnsymbol{footnote}}
%\footnotetext[2]{Equal contribution.}
\footnotetext[0]{akshitk@psu.edu, guance@ucmail.uc.edu}
\renewcommand{\thefootnote}{\arabic{footnote}}

\renewcommand{\>}{\rangle}
\newcommand{\<}{\langle}

\maketitle

\begin{abstract}
    We present a quantum algorithm for the Kravchuk transform that scales logarithmically in both the dimension and the inverse of the error parameter. The quantum Kravchuk transform maps computational basis states to states with amplitudes proportional to Kravchuk functions. We achieve this by combining two key techniques: the structural relationship between the Kravchuk transform and the Lie algebras $\mathfrak{su}(2)$, and a recent fast-forwarding simulation method for $\mathfrak{su}(2)$ operators in the oscillator representation. More precisely, we first establish the map from Kravchuk transform in computational basis to $\mathfrak{su}(2)$ in Fock basis. Then built on this connection, we apply the fast-forwarding to achieve an efficient quantum Kravchuk transform.
    %We obtain a fast algorithm for performing Kravchuk transform on a given quantum state using the Hamiltonian splitting introduced in \cite{iyer2026efficient}. Uses, connections etc.
\end{abstract}
\section{Introduction}
Named after the Ukrainian mathematician Mykhailo Pilipovich Kravchuk, the Kravchuk transform is constructed from the discrete Kravchuk polynomials. The Kravchuk polynomials and their corresponding Kravchuk functions are associated with a particular discretization of the harmonic oscillator, with Kravchuk functions as eigenfunctions. The scientific significance of Kravchuk transforms extends across a growing spectrum of disciplines, including but not limited to digital signal processing \cite{SBM+19}, probability in the context of multinomial distribution \cite{DG14}, coding theory \cite{Lev02}, when studying random walks in quantum probability \cite{FK07}, probability theory and statistics \cite{FS91,Ver71}. Recently there have been results connecting Kravchuk Transforms to Decoded Quantum Interferometry (DQI) state preparation \cite{marwaha2026complexitydecodedquantuminterferometry}, which motivates an efficient implementation . This question was posed in \cite{jain2025efficient}, we show that the Quantum Hermite Transform can indeed be used to implement the Quantum Kravchuk Transform.

This work centers on the development of Quantum Kravchuk Transform (QKT) which maps computational basis states to Kravchuk basis states. Precisely, QKT realizes 
\[
\sum_{k=0}^{N-1} a_k |k\> \mapsto \sum_{k=0}^{N-1} a_k |\phi_k\>,
\]
where $|\phi_k\>$ are Kravchuk function states.
Concretely, these Kravchuk basis states are quantum states where amplitudes are proportional to the Kravchuk functions. To demonstrate a quantum advantage, the QKT must achieve high computational efficiency, comparable to that of quantum Fourier transform. Hence, it demands a quantum circuit whose gate complexity scales logarithmically in the Hilbert space dimension $N$ on which it acts. Our principal contribution meets these criteria: we develop a quantum circuit $U$ comprising $\mathsf{poly}(\log N, \log(\frac{1}{\epsilon}))$ elementary gates that perform the basis transformation to Kravchuk states indexed by $n \in \{0, \dots, N-1\}$, with error bounded by $\epsilon$. By contrast, classical evaluation of the discrete Kravchuk transform scales polynomially in $N$ due to the exponentiation of a tridiagonal symmetric matrix using Pad\'e approximation and the subsequent matrix-vector product which costs $O(N)$. 

In this paper, we first develop the map from Kravchuk transform to $\mathfrak{su}(2)$ Lie algebra in the oscillator representation. This connection facilitates the usage of a fast-forwarded simulation of discretized $SU(2)$ unitaries, a recently discovered technique \cite{iyer2026efficient}. Hence, together they lead to an efficient implementation of QKT. An informal statement of our main result is as follows.
\begin{thm}[Efficient QKT, informal]
    There exists a quantum circuit of complexity polylogarithmic in $1/\epsilon$ and $N$ that implements an $\epsilon$-approximation of $(N+1)$-dimensional quantum Kravchuk transform. 
\end{thm}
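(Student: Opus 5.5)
The approach is to identify the Kravchuk transform with a fixed rotation in $SU(2)$ acting on the spin-$j$ irreducible representation, $N = 2j$. That rotation is then realized through the two-mode oscillator (Schwinger) representation of $\mathfrak{su}(2)$, where the fast-forwarding result of \cite{iyer2026efficient} applies.

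\textbf{Step 1: Kravchuk functions as Wigner $d$-matrix entries.} Recall that the normalized Kravchuk functions with parameter $p=1/2$ satisfy
\[
\phi_k(x) \;\propto\; \sqrt{\binom{N}{x}\binom{N}{k}}\,2^{-N/2}\,K_k(x;\tfrac12,N).
\]
This equals, up to signs, the Wigner small-$d$ matrix element $d^{j}_{m,m'}(\pi/2)$ with $j=N/2$, $m=k-j$ and $m'=x-j$. The identity follows by comparing the explicit sum formula for $d^j_{mm'}(\beta)$ with the hypergeometric form of $K_k$ at $\beta=\pi/2$, or from the generating function $(1-t)^x(1+t)^{N-x}$.

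Hence the $(N+1)\times(N+1)$ Kravchuk matrix equals $e^{-i\frac{\pi}{2}J_y}$ restricted to the spin-$j$ irrep, possibly conjugated by a diagonal phase $e^{i\theta J_z}$ to fix signs. The sign conventions must be matched carefully, but any mismatch is a diagonal phase, which is cheap to implement.

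\textbf{Step 2: Map to the oscillator (Fock) picture.} Use the Jordan--Schwinger map
\[
J_x=\tfrac12(a^\dagger b+b^\dagger a),\quad J_y=\tfrac{1}{2i}(a^\dagger b-b^\dagger a),\quad J_z=\tfrac12(a^\dagger a-b^\dagger b).
\]
The spin-$j$ irrep is the total-photon-number-$N$ subspace, spanned by $|n\>|N-n\>$. Prepare $|k\>$ on an $O(\log N)$-qubit register. Then apply a reversible arithmetic circuit $|k\>|0\>\mapsto|k\>|N-k\>$, which embeds the input into two truncated bosonic modes encoded in binary. Next apply $e^{-i\frac{\pi}{2}J_y}$, which preserves total photon number and therefore keeps the state inside the truncated space exactly. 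Finally uncompute the second register using $x \mapsto N-x$. This last step is valid because the output stays in the $n_a+n_b=N$ sector, so the second register is a deterministic function of the first.

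\textbf{Step 3: Fast-forwarding.} Invoke the $\mathfrak{su}(2)$ fast-forwarding result of \cite{iyer2026efficient}, which implements $e^{-i\theta J_y}$ in the oscillator representation with gate cost $\mathsf{poly}(\log N,\log(1/\epsilon))$ to precision $\epsilon$ on the number-bounded subspace.

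If that result is phrased only for single-mode quadrature operators, I would decompose instead via the beam-splitter identity. The rotation $e^{-i\frac\pi2 J_y}$ is a 50:50 beam splitter, which can be written as a product of single-mode Fourier/phase rotations $e^{i\theta a^\dagger a}$ and quadrature-sum/difference maps. Each of these is implemented with quantum Hermite transforms (available per \cite{jain2025efficient}) and exact diagonal phases. Composing the pieces and summing their errors gives total error $\epsilon$ with polylogarithmic cost.

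\textbf{Main obstacle.} The hard step is Step 3. Fast-forwarding acts on an infinite-dimensional oscillator space, while our registers are truncated at $N$ photons. So the method must not leak amplitude outside the truncation, or any leakage must be bounded by $\epsilon$. Conservation of total photon number guarantees the exact unitary has no leakage. The approximation, however, may pass through intermediate operators such as quadratures or Hermite transforms that do not conserve photon number. I would handle this by truncating each mode at a cutoff $\Lambda = O(N)$ and bounding the tail of the intermediate states by a Hermite-function decay estimate, then verifying that the resulting error enters the gate count only polylogarithmically.
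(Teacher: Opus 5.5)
Your proposal is correct and follows essentially the paper's route. You realize $\mathcal{K}$ as an $SU(2)$ rotation by $\pi/2$ in the spin-$N/2$ irrep, embed $\ket{l}$ into the two-mode Fock sector with a reversible subtractor, and fast-forward through the Jordan--Schwinger oscillator picture of \cite{iyer2026efficient}, using a QHT on both modes plus $\bar{p}_1\bar{p}_2$ and $\bar{x}_1\bar{x}_2$ shears. The differences are cosmetic: the paper obtains the generator from the tridiagonal identity $A=(N+1)I-2\bar{S}$, i.e.\ $\mathcal{K}=e^{-i\pi N/4}e^{i\frac{\pi}{2}\bar{S}}$ with $\bar{S}=J_x$, rather than from your Wigner-$d$ identification with $e^{-i\frac{\pi}{2}J_y}$ (the two differ exactly by the diagonal $J_z$-conjugation you allow for), and the leakage you flag is controlled not by a Fock cutoff but by the grid-discretization lemma (error $e^{-\gamma L}$ once $N\le cL$, with the $\bar{p}_1\bar{p}_2$ shear split into three steps to meet the $|t|\le 1/(2e)$ restriction).
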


Our paper is organized as follows. %\Cref{sec:sumamry} summarizes the main results. 
In \Cref{sec:KT}, we review the background on Kravchuk functions and classical Kravchuk transform. We review and summarize how to perform efficient simulation of $SU(2)$ in oscillator representation in \Cref{sec: su2-simulation}. \Cref{sec:qkravchuk} first proves that Kravchuk transform is in $SU(2)$ in oscillator basis and then provides the full details of quantum Kravchuk transform. Finally, we conclude and pose some open questions in \Cref{sec:conclusion}.

\begin{comment}
    \subsection{Summary of Main Results}\label{sec:sumamry}
\begin{thm}[Quantum Kravchuk Transform]
Given a quantum state with function values $f(x_i)$ in the amplitudes, $\ket \psi = \sum_{i=-0}^{N-1} f(x_i) \ket i$. We can obtain the state $K \ket \psi = \sum_{i=-0}^{N-1} c_{n,h}(f) \ket \phi$ in {\color{red}$\polylog(N, 1/\epsilon ...)$}, where $\ket \phi$ are the Kravchuk function states and $c_{n,h}$ are the Kravchuk coefficients. 
\end{thm}
\end{comment}

\subsection{Related Work}

\paragraph{The Fourier-Kravchuk Transform.} The Kravchuk transform was originally introduced as the Fourier-Kravchuk transform by Atakishiyev and Wolf \cite{atakishiyev1997fractional} within the context of a finite oscillator model applied to planar multi-modal waveguides. In this work, we adopt the foundational definition established in their framework. Subsequently, Atakishiyeva and Atakishiyev \cite{atakishiyevaKravchukOscillatorRevisited2014} reinterpreted this Kravchuk oscillator model, along with its discrete position and momentum operators, through the lens of $\mathfrak{su}(2)$ angular momentum operators. We explore and build upon this algebraic connection with details in \Cref{sec:KO}.

\paragraph{Complexity of Decoded Quantum Interferometry.} Marwaha et al. \cite{marwaha2026complexitydecodedquantuminterferometry} construct a Kravchuk oscillator Hamiltonian, $\boldsymbol{H}_s$, which is strictly diagonal in the Kravchuk basis, claiming that DQI prepares obfuscated oscillator states which are hard to do classically. They apply the Hadamard transform on \emph{Dicke states} to effect a Kravchuk Transform on a uniform symmetric grid. Our result works on a arbitrary computational basis state, so should have wider implications for DQI extensions mentioned in the paper.

\paragraph{QKT on Qudit Architecture.} 
Stobi{\'n}ska et al. \cite{stobinska2019quantum} demonstrated a single-step execution of a fractional quantum Kravchuk transform utilizing a specialized \emph{qudit} architecture requiring only a single gate. They provided a proof-of-concept experimental realization by routing multi-photon pulse pairs (up to five photons) through a beam splitter and performing coincidence counting, leveraging quantum interference to execute the transform. However, due to the inherent state-preparation and scalability constraints of specialized photonic setups, this approach is less suitable for general-purpose quantum computing. Conversely, our algorithm is explicitly designed for standard \emph{qubit} architectures and naturally accommodates any initial state that can be efficiently prepared.

\paragraph{Applications.}
In data and image processing, Yap et al. \cite{yap2003image} introduced a novel class of discrete orthogonal moments based on Kravchuk polynomials to enhance numerical stability. They demonstrated that Kravchuk coefficients excel at isolating and extracting local features of an image, as opposed to other orthogonal moments that predominantly capture global features.

\section{Preliminaries}

\paragraph{Notations.} We define $X_N = \{0, 1, \dots, N\}$ with $N \in \mathbb{N}$ as a discrete grid with N + 1 points. It is used to define the Hilbert space $\ell^2(X_N)\cong \mathbb{C}^{N+1}$. We also denote the computational basis states by $\{\ket l\}$ where $l \in \{0, ... N\}$ are bit-strings of size $log(N+1)$.\\

\noindent Below, we review the definitions of several bases utilized throughout this work and the mapping between them induced by the Jordan-Schwinger representation.  
\paragraph{Fock Basis.} \label{def:fock}
The Fock basis, $\mathcal{B}_2 = \{\ket{N,0}, \ket{N-1, 1}, \ldots, \ket{0, N}\}$ is defined on the two-mode bosonic Hilbert space $\mathcal{H}_{N+1}$. To realize this basis on a qubit register, we need to specify how the labels $\ket{n_1, n_2} \in \mathcal{B}_2$ are encoded. \cite{iyer2026efficient} uses two encodings, related by explicit quantum circuits. We only concern ourselves with the $n=2$ setting, where there are only two bosonic modes.\\

\emph{Number-basis encoding (via descending lex ordering).} 
Order the tuples lexicographically with $n_1$ decreasing
$\ket 0 \mapsto \ket{N, 0}, \;\ket 1 \mapsto \ket{N-1, 1},  \dots,\ket N \mapsto \ket{0, N}$. This is the trivial relabeling $\ell \mapsto (N-\ell,\,\ell)$, implemented by a reversible subtraction circuit $V_1$ of cost $O(\log^2 N)$:
\begin{align*}
    |\ell\rangle\,|0\rangle 
    \;\xmapsto{V_1}\; |N-\ell\rangle\,|\ell\rangle 
    \;=\; |n_1, n_2\rangle.
\end{align*}
such that $n_1, n_2 \in \{0, ..., N\}$ and $n_1 + n_2 = N$.
    
\emph{Position-basis encoding (via the quantum Hermite transform).} 
The Hermite states $\ket{\psi_m}$ are the wavefunctions of the oscillator Fock states $\ket{m}$. The hermite functions are obtained when projecting the fock states on the position basis  $\braket{x|m} = \psi_m(x)$, expressed on a discretized position grid:
\begin{align*}
    \ket{\psi_m}
    = \left(\frac{2\pi}{L}\right)^{1/4}\!
      \sum_{x=-L/2}^{L/2-1} \psi_m(x) \ket{x}.
\end{align*}
The quantum Hermite transform of \cite{jain2025efficient} implements 
$|m\rangle \mapsto |\psi_m\rangle$ on a single mode; applying it to both 
modes gives the encoding 
$\ket{n_1, n_2} \mapsto \ket{\psi_{n_1}, \psi_{n_2}}$.

\paragraph{Jordan-Schwinger map.} JS representation defines a map from matrices to bilinear expressions of quantum oscillators \cite{Jor35,Sch52}. Following the description from \cite{dubus2024bosons}, a given lie algebra $\mathfrak{c}$ can be represented using $n$-dimensional invertible square matrices $O_i := \{O_1, \dots, O_N\}$ which follow the lie algebra commutator
 \begin{align}
     [O_j, O_k] = c_{jkl} O_l.
 \end{align} 
 The Jordan-Schwinger map, $\phi$, defines a mapping between these matrices $O_i$ and the representation of algebra $\mathfrak{c}$ in Fock space $\mathcal{F}$. This Fock space consists of $n$-independent harmonic oscillators $\ket{m_1, \dots, m_n}$ which admit bosonic operators $\{a_i\}_{i=1}^n$ 
 \begin{align}
     a_i\ket{m_1, \dots, m_n} &= (\mathbbm{1}\otimes \dots  a \otimes \dots \otimes \mathbbm{1}) \ket{m_1, \dots, m_n} = \sqrt{m_i}\ket{m_1, \dots, m_i-1, \dots, m_n} \\
     a_j^{\dagger}\ket{m_1, \dots, m_n} &= (\mathbbm{1}\otimes \dots  a^{\dagger} \otimes \dots \otimes \mathbbm{1}) \ket{m_1, \dots, m_n} = \sqrt{m_j+1}\ket{m_1, \dots, m_j+1, \dots, m_n}.
 \end{align}
 We can finally define the Jordan Schwinger map $\phi$ 
 \begin{align}
     \phi: O_j \mapsto \hat{O}_j:= 
\begin{pmatrix} a^\dagger_1 & \cdots & a^\dagger_n \end{pmatrix}
\begin{pmatrix}
(O_j)_{11} & \cdots & (O_j)_{1n} \\
\vdots & & \vdots \\
(O_j)_{n1} & \cdots & (O_j)_{nn}
\end{pmatrix}
\begin{pmatrix} a_1 \\ \vdots \\ a_n \end{pmatrix}
= \mathbf{a}^\dagger O_j\, \mathbf{a}.
 \end{align}
Consider the $\mathfrak{su}(2)$ lie algebra, used for describing the angular momentum (or spin) of a particle in quantum mechanics. The algebra follows the commutator: $[S_j, S_k] = i\epsilon_{jkl} S_l $ where $S_i$ are the matrix representations of $\mathfrak{su}(2)$. The Jordan-Schwinger map can then be applied to represent the elements of $\mathfrak{su}(2)$ in a system of two harmonic oscillators (or \emph{bosonic modes}).

The fundamental representation of $\mathfrak{su}(2)$ are the 2-dimensional Pauli matrices $\{\sigma_x, \sigma_y, \sigma_z\}$. The corresponding representation in the Fock basis $\ket{m_1, m_2}$ is obtained through the bosonic operators $\{\hat{S}, \hat{A}, \hat{D}\}$ upto a factor of $(1/2)$
\begin{align}
    \hat{S} = 
    \begin{pmatrix} a^\dagger_1 & a^\dagger_2 \end{pmatrix}
    \begin{pmatrix}
    0 & \tfrac{1}{2} \\
    \tfrac{1}{2} & 0
    \end{pmatrix}
    \begin{pmatrix} a_1 \\ a_2 \end{pmatrix}
    = \mathbf{a}^\dagger \tfrac{1}{2}\sigma_x\, \mathbf{a}
    =\tfrac{1}{2}\bigl(\hat{a}_1^{\dagger}\hat{a}_2
        + \hat{a}_2^{\dagger}\hat{a}_1\bigr)
\end{align}
Similarly, we obtain the other two operators using $\sigma_y$ and $\sigma_z$ in \Cref{eq:JS-generators}. Note that these are the fundamental representations, higher dimensional reps ($N\times N$ matrices) also exist. The explicit definitions and extensions to $N$-dimensional irreps are discussed in \Cref{sec:su2-irrep}.
\section{Kravchuk Transform}\label{sec:KT}
The 1D Harmonic Oscillator is defined on the continuous space $x \in \mathbb{R}$ as
\begin{align}\label{eq:QHO}
    \hat{H} = -\frac{\partial^2}{\partial x^2} + x^2
\end{align}
where $\hat{H}$ is an unbounded operator which admits Hermite functions $\psi_n(x)$ as its eigenfunctions
\begin{align}
    \hat{H} \psi_n &= (2n + 1) \psi_n.
\end{align}

Discretizing the operator $\hat{H}$ allows us to numerically simulate the oscillator on classical (and now quantum) computers. The finite difference approach based on Kravchuk polynomials gives a natural discretization while preserving the key properties of the Hamonic oscillator on a regular grid. Analogous to a Hermite Transform where continuous functions can be decomposed in Hermite function basis $\{\psi_n(x)\}$, the following sections develop a Kravchuk Transform for discrete functions. This presentation broadly follows \cite{chauleur2024discrete}.

\subsection{Kravchuk Polynomials and Functions}
Kravchuk functions were introduced by \cite{Krawtchouk1929} as a generalization of Hermite functions, they follow similar orthogonality relations but over a summation on $N+1$ grid points with binomial weights. We first introduce the Kravchuk Polynomials $K_n$. Let a discrete grid with $N+1$ points be $X_N = \{0, 1, \ldots, N\}$ with $N \in \mathbb{N}$.
The $n$-th Kravchuk polynomial on $X_N$ are defined by
\begin{equation}
K_n(k) = \frac{1}{2^n}\sum_{j=0}^{n} (-1)^{n-j} \binom{k}{j}\binom{N-k}{n-j},
\qquad n \in \{0, 1, \ldots, N\},
\label{eq:kravchuk-def}
\end{equation}
where $n \in \mathbb{Z}, 0\leq n \leq N$ and satisfy the orthogonality relation with a binomial weight function
\begin{equation}
\sum_{k=0}^{N} K_n(k)\, K_m(k) \binom{N}{k}
= \delta_{nm}\, 2^n \binom{N}{n}^{-1},
\qquad 0 \le n,m \le N.
\label{eq:orthogonality}
\end{equation}
Orthogonal polynomials with a positive weight function satisfy a three-term recurrence relation according to Favard's Theorem \cite{chihara1978}. The three-term recurrence for Kravchuk polynomials with binomial weights is given by
\begin{align}\label{eq:recurrence}
    (n + 1)K_{n+1}(k) =  \left(k - \frac{N}{2}\right)K_n(k) - \frac{N - n + 1}{4} K_{n-1}(k).
\end{align}
The proof involves writing the three terms in polynomial form and then comparing the coefficients. They also obey a symmetry relation between $k$ and $N-k$
\begin{equation}
K_n(k) = (-1)^n K_n(N-k)
\label{eq:symmetry}
\end{equation}
which follows from observing that $K_n(k)$ is the coefficient of $X^n$ in the generating polynomial
\begin{equation}
F_k(X)
= \left(\frac{1+X}{2}\right)^{\!k}\!\left(\frac{1-X}{2}\right)^{\!N-k}
= \sum_{n=0}^{N} K_n(k)\, X^n.
\label{eq:generating}
\end{equation}
Replacing $k$ with $N-k$ sends $F_k(X) \mapsto F_k(-X)$, which immediately yields~\eqref{eq:symmetry}.

\paragraph{Difference Equation.}The properties introduced above imply a crucial difference equation, which will form the basis of our Kravchuk Oscillator. This equation relates the values of a fixed polynomial $K_n$ on neighboring grid points $\{k+1, k, k-1\}$
\begin{align}\label{eq:polynomial-diffeq}
    (N - k) K_n(k + 1) - (N - 2n) K_n(k) + k K_n(k - 1) = 0.
\end{align}
for all $0\leq n \leq N$, $k \in X_N$ where $N \in \mathbb{N}$. This result allows us to implement the discrete difference operator which can be interpreted as the discretization of the differential operator in the continuous Harmonic Oscillator \Cref{eq:QHO}.

We introduce \emph{Kravchuk functions} which are normalizations of the Kravchuk polynomials. These functions are suitable for various applications since they're numerically stable and orthonormal without any explicit binomial weights. Moreover, these functions are known to be the eigenfunctions of the discrete Harmonic Oscillator. The Kravchuk functions form an orthonormal basis of the finite-dimensional Hilbert space
\begin{align*}
    \mathcal{H}:= \ell^2(X_N) \cong \mathbb{C}^{N+1},
\end{align*}
the space of complex-valued functions on the discrete set of $N+1$ points $X_N$, with inner product
\begin{align*}
    \langle f, g\rangle = \sum_{x=0}^{N} f^*(x)\, g(x)
\end{align*}
where $f^*(x)$ denotes the complex conjugate of the function vector.
\begin{defn}[Kravchuk Functions]\label{def:kravchuk_function}
For all $k \in X_N $, $ n \in \mathbb{N}$ and the binomial distribution function
\begin{align}
\Pi(k) = 
\dfrac{1}{2^N} \dbinom{N}{k} = \dfrac{1}{2^N} \dfrac{N!}{k! \, (N-k)!}
\end{align}
Kravchuk functions are defined as
\begin{align}\label{eq:kravchuk-function}
    \phi_n(k) = 2^n\binom{N}{n}^{-1/2} \sqrt{\Pi(k)} \, K_n(k).
\end{align}
\end{defn}
All the properties of polynomials can now be transferred to the functions $\phi_n(x)$, e.g. the orthogonality condition becomes
\begin{align}
    \langle \phi_n, \phi_m \rangle_{\ell^2(X_N)} = \sum_{k=0}^{N} \phi_n(k) \phi_m(k) = \delta_{n,m}, \quad 0 \leq n, m \leq N 
\end{align}
More importantly, the difference equation for $\phi_n(x)$ can be derived using \Cref{eq:polynomial-diffeq} by multiplying $\sqrt{\Pi(x)}$ on both sides to give
\begin{align}\label{eq:kravchuk-diffeq}
\sqrt{k(N - k + 1)} \, \phi_n(k - 1) - (N - 2n) \phi_n(k) + \sqrt{(k + 1)(N - k)} \, \phi_n(k + 1) = 0.
\end{align}
This equation is identified as the Discrete Harmonic (Kravchuk) oscillator. We thus obtain a Harmonic Oscillator analogue for Kravchuk functions.
\begin{lem}[Kravchuk Oscillator on $X_N$]\label{lem:kravchuk-oscillator}
     For a function $f(k) \in \ell^2({X_N})$, the difference operator $\hat{K}$ defined as
\begin{align}
\hat{K}f(k) = \frac{1}{2}\left(- \sqrt{(k+1)(N-k)} f(k+1) - \sqrt{k(N-k+1)} f(k-1)\right),
\end{align}
for $k \in X_N$, and $\hat{K}f(k) = 0$ if $k \notin X_N$, admits the Kravchuk function $\phi_n$ as its eigenfunction
\begin{align}\label{eq:kravchuk-eigeneq}
\hat{K} \phi_n = \left(n - \frac{N}{2}\right)\phi_n.
\end{align}
\end{lem}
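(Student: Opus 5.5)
The plan is to derive the eigenvalue equation \eqref{eq:kravchuk-eigeneq} directly from the function-level difference equation \eqref{eq:kravchuk-diffeq}, which the paper obtains from \eqref{eq:polynomial-diffeq} by multiplying through by $\sqrt{\Pi(k)}$. Evaluating $\hat K\phi_n(k)$ from the definition gives
\begin{align*}
\hat K\phi_n(k) = -\tfrac12\Bigl(\sqrt{(k+1)(N-k)}\,\phi_n(k+1) + \sqrt{k(N-k+1)}\,\phi_n(k-1)\Bigr).
\end{align*}
Rearranging \eqref{eq:kravchuk-diffeq} expresses the bracket as $(N-2n)\phi_n(k)$. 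Hence $\hat K\phi_n(k) = -\tfrac12(N-2n)\phi_n(k) = (n-\tfrac N2)\phi_n(k)$ for every $k\in X_N$.

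To be safe, I would first re-verify \eqref{eq:kravchuk-diffeq} from \eqref{eq:polynomial-diffeq}, since the whole argument rests on it. Write $\phi_n(k) = c_n\sqrt{\Pi(k)}\,K_n(k)$ with $c_n = 2^n\binom{N}{n}^{-1/2}$. The needed ratios are
\begin{align*}
\sqrt{\Pi(k)/\Pi(k+1)} = \sqrt{(k+1)/(N-k)}, \qquad \sqrt{\Pi(k)/\Pi(k-1)} = \sqrt{(N-k+1)/k}.
\end{align*}
Multiplying \eqref{eq:polynomial-diffeq} by $c_n\sqrt{\Pi(k)}$ turns $(N-k)K_n(k+1)$ into $(N-k)\sqrt{(k+1)/(N-k)}\,\phi_n(k+1) = \sqrt{(k+1)(N-k)}\,\phi_n(k+1)$. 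The term $kK_n(k-1)$ becomes $\sqrt{k(N-k+1)}\,\phi_n(k-1)$ in the same way. This reproduces \eqref{eq:kravchuk-diffeq}.

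The main obstacle is the pair of boundary points $k=0$ and $k=N$, where the ratios above involve division by zero and $\phi_n(\pm1)$ or $\phi_n(N+1)$ lies outside $X_N$. Here I would use the convention in the statement that values outside $X_N$ are zero. At $k=0$, the coefficient $\sqrt{k(N-k+1)}$ vanishes, so the outside value $\phi_n(-1)$ never enters. Likewise \eqref{eq:polynomial-diffeq} at $k=0$ reads $NK_n(1) - (N-2n)K_n(0) = 0$, with no $K_n(-1)$ term. The function-level equation at $k=0$ then follows by multiplying by $c_n\sqrt{\Pi(0)}$ and using $N\sqrt{\Pi(0)} = \sqrt{N}\sqrt{\Pi(1)}$. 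The case $k=N$ is symmetric, or it follows from \eqref{eq:symmetry}.

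If I wanted to avoid relying on \eqref{eq:polynomial-diffeq} as given, I would prove it from the generating function \eqref{eq:generating}. Note that $F_{k+1} = F_k\cdot\frac{1+X}{1-X}$ and $F_{k-1} = F_k\cdot\frac{1-X}{1+X}$. Then compare coefficients of $X^n$ after applying $(1-X^2)\frac{d}{dX}$, which gives the standard self-dual identity $(N-k)K_n(k+1) + kK_n(k-1) = (N-2n)K_n(k)$. I would treat this as a fallback only, since the statement allows assuming \eqref{eq:polynomial-diffeq}.
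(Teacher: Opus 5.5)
Your proposal is correct and is the paper's own argument: substitute $\phi_n$ into $\hat K$, then use the function-level difference equation \eqref{eq:kravchuk-diffeq} to replace the bracket by $(N-2n)\phi_n(k)$. Your re-derivation of \eqref{eq:kravchuk-diffeq} from \eqref{eq:polynomial-diffeq} via the ratios $\Pi(k)/\Pi(k\pm 1)$, and your endpoint check at $k=0,N$, are sound details that the paper leaves implicit; at the endpoints the coefficient of the out-of-range value vanishes, so no convention about $f$ outside $X_N$ is actually needed.
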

\begin{proof}
    The definition of $\hat{K}$ directly implies \Cref{eq:kravchuk-eigeneq} by plugging in $\phi_n(k)$ in place of $f(k)$ and using the difference , \Cref{eq:kravchuk-diffeq}.
\end{proof}

\begin{remark}[Connection between Kravchuck and Hermite]
   It is well known that Kravchuk polynomials were introduced as a general discretization of Hermite Polynomials. We show in \Cref{sec:k_to_h} that by taking the limit $N \rightarrow \infty$ we recover the Hermite Polynomials.
\end{remark}

\subsection{Classical Kravchuk Transform}\label{sec:ckt}
Hermite Transforms in \cite{driscoll97, jain2025efficient} transform a function defined on position basis (e.g. $X_N$) to the Hermite basis. Similarly, the \emph{Kravchuk Transform} can be used to express any function $f$ defined on the grid $\{0, 1, \dots N\}$, as the function vector $F = (f(0), \dots , f(N))$, in the Kravchuk basis by calculating the inner product 
\begin{align}
    c_n(f) = \langle \phi_n, f\rangle_{\ell^2(X_N)}
\end{align}
for all $n \in {0, \dots , N}$.

\begin{defn}
Let $N \in \mathbb{N}$ and a function $f$ on the standard grid $X_N$, define the input and output vectors
\begin{align*}
F =  \left(f(k)\right)_{k\in X_N}, \qquad C = \left((c_{n}(f)\right)_{n=0}^{N}
\end{align*}
Then the Kravchuk coefficients are obtained by a single matrix-vector
multiplication,
\begin{equation}
c_{n}(f) = \sum_{k=0}^{N} \phi_n(k) f(k)
\quad\Longleftrightarrow\quad
C = L F,
\end{equation}
where $\phi_n$ are Kravchuk functions defined in \Cref{eq:kravchuk-function} and $L$ is the $(N+1)\times(N+1)$ matrix
\begin{equation}
L =
\begin{pmatrix}
\phi_0(0) & \phi_0(1) & \cdots & \phi_0(N) \\
\phi_1(0) & \phi_1(1) & \cdots & \phi_1(N) \\
\vdots & \vdots & \ddots & \vdots \\
\phi_N(0) & \phi_N(1) & \cdots & \phi_N(N)
\end{pmatrix}.
\label{eq:L-matrix}
\end{equation}
\end{defn}
The orthonormality of the Kravchuk functions on the grid implies that the rows of $L$ are orthonormal, i.e.\ $L L^* = I$. Hence, $L$ is unitary and the transform $F \mapsto C = LF$ is an isometry on $\mathbb{R}^{N+1}$. However, the authors in \cite{chauleur2024discrete} do not explicitly justify why $\mathcal{K}$ is defined as the Kravchuk transform rather than $L$. For completeness, we provide a brief justification here. 

While $L$ is unitary and effects a map from computational to Kravchuk basis, it is not yet a \emph{transform} in the Fourier-analytic sense. The continuous Fourier Transform $\mathcal{F}$ applied to the Hermite function follows the eigenfunction and 4-periodic properties: 
\begin{align}
\mathcal{F} \psi_{n}(x) = e^{-i\pi (n/2)} \psi_{n}(x); \qquad \mathcal{F}^4 = I,
\end{align}
however, $L$ does not satisfy these properties for $\phi_n$. To emulate the action of $\mathcal{F}$ for Kravchuk Functions $\phi_n$ we follow the definitions in \cite{atakishiyev1997fractional, chauleur2024discrete} to obtain the Kravchuk Transform operator $\mathcal{K}$. 

\begin{thm}[Kravchuk Transform, adapted from Prop 16 \cite{chauleur2024discrete}]
\label{thm:Kravchuk_transform}
For an input vector $x \in \mathbb{R}^{N+1}$ and $k \in \mathbb{N}$, the Kravchuk Transform is defined by the map $x \mapsto \tilde{x}$
\begin{align}
\tilde{x}_k = \sum_{j=0}^{N} e^{i\frac{\pi}{2}(j-k-N/2)}\, \phi_k(j)\, x_j, \qquad 0 \leq k \leq N,
\end{align}
this operation corresponds to multiplication by a unitary $\mathcal{K}$ such that $\tilde{x} = \mathcal{K}x$. The operator $\mathcal{K}$ can be defined in terms of the $L$ operator with diagonal operator $D$ and a global phase 
\begin{align}\label{eq:kravchuk-transform-L}
\mathcal{K} = e^{-i\pi N/4}\, D^{*} L D
\end{align}
where $D = \mathrm{diag} (1,\, e^{i\pi/2},\, e^{i\pi},\, \ldots,\, e^{i\pi N/2})$. The Kravchuk Transform operator can also be expressed as an evolution of a symmetric tridiagonal matrix $A$
\begin{align}\label{eq:kravchuk-transform}
\mathcal{K} = e^{i\pi/4} e^{-i(\pi/4)A},
\end{align}
where $A$ is the matrix for $1 \leq k \leq N$
\begin{equation}
A =
\begin{pmatrix}
N+1   & -\beta_1 &          &         &          \\
-\beta_1 & N+1   & -\beta_2 &         &          \\
      & -\beta_2 & N+1   & \ddots  &          \\
      &        & \ddots & \ddots  & -\beta_N \\
      &        &        & -\beta_N & N+1
\end{pmatrix},
\qquad
\beta_k = \sqrt{k(N-k+1)}.
\label{eq:A-matrix}
\end{equation}
\end{thm}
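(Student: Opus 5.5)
The statement bundles three claims: the entrywise formula for $\tilde{x}_k$, the factorization \eqref{eq:kravchuk-transform-L} together with unitarity, and the exponential form \eqref{eq:kravchuk-transform}. I will take them in that order.

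\textbf{Entrywise formula and unitarity.} Since $D$ is diagonal with entries $D_{jj}=e^{i\pi j/2}$, we have
\[
\bigl(e^{-i\pi N/4}D^{*}LD\bigr)_{kj}=e^{-i\pi N/4}\,e^{-i\pi k/2}\,\phi_k(j)\,e^{i\pi j/2}=e^{i\frac{\pi}{2}(j-k-N/2)}\phi_k(j).
\]
This is exactly the coefficient in the definition of $\tilde{x}_k$, so the entrywise formula and \eqref{eq:kravchuk-transform-L} are the same statement. Unitarity follows because $L$ is unitary (orthonormality of the $\phi_n$, as noted before the theorem) and $D$ and the global phase are unitary.

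\textbf{Spectral form of $A$.} Read off the matrix of $\hat K$ from \Cref{lem:kravchuk-oscillator}. The coefficient of $f(k+1)$ in $\hat K f(k)$ is $-\tfrac12\sqrt{(k+1)(N-k)}=-\tfrac12\beta_{k+1}$, and the coefficient of $f(k-1)$ is $-\tfrac12\beta_k$. Hence $A=(N+1)I+2\hat K$. By \eqref{eq:kravchuk-eigeneq}, $A\phi_n=(2n+1)\phi_n$. Since $L$ is real orthogonal with rows $\phi_n$, this gives $A=L^{T}\,\mathrm{diag}(2n+1)\,L$. Therefore
\[
e^{i\pi/4}e^{-i(\pi/4)A}=L^{T}\,\mathrm{diag}\bigl((-i)^n\bigr)\,L ,
\]
and \eqref{eq:kravchuk-transform} reduces to the matrix identity
\[
\sum_{n=0}^{N}(-i)^n\,\phi_n(k)\,\phi_n(j)=e^{i\frac{\pi}{2}(j-k-N/2)}\,\phi_k(j),\qquad 0\le j,k\le N. \qquad (\ast)
\]

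\textbf{Proving $(\ast)$, the main obstacle.} I would first establish the duality $\phi_n(k)=\phi_k(n)$. This is equivalent to $2^n\binom{N}{k}K_n(k)=2^k\binom{N}{n}K_k(n)$, which I would verify from the generating function \eqref{eq:generating}. Expanding $\sum_{n,k}\binom{N}{k}K_n(k)2^nX^nY^k$ gives
\[
\bigl((1-X)+Y(1+X)\bigr)^N=(1+Y+X(Y-1))^N ,
\]
and this expression is symmetric under $X\leftrightarrow Y$, which yields the duality. The duality makes $L$ symmetric.

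For $(\ast)$ itself I would use a bivariate generating function. The plan is to multiply the left side by $\sqrt{\binom{N}{k}\binom{N}{j}}\,X^kY^j$ up to the normalizations in \eqref{eq:kravchuk-function}, and use duality to rewrite $\phi_n(k)\phi_n(j)$ as $\phi_k(n)\phi_j(n)$. Summing over $n$ then yields an expression of the form
\[
\sum_n \binom{N}{n}(-i)^n(\cdots)^n(\cdots)^{N-n}=\bigl((1+X)(1+Y)-i(1-X)(1-Y)\bigr)^N/\text{(normalization)} .
\]
Rewritten as $\bigl((1-i)(1+XY)+(1+i)(X+Y)\bigr)^N$, this is $(1-i)^N\bigl(1+XY+i(X+Y)\bigr)^N$. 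The last factor is again of the form $(1+Y+X(Y-1))^N$, after the rescalings $X\mapsto iX$, $Y\mapsto -iY$ and the phase $(1-i)^N 2^{-N}$. These rescalings should produce exactly the phases $e^{i\pi(j-k)/2}e^{-i\pi N/4}$ multiplying $\phi_k(j)$. The delicate part is tracking these constants and phases consistently.

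If the bookkeeping becomes unwieldy, I would fall back on the $\mathfrak{su}(2)$ picture. By the Jordan--Schwinger identification, $\hat K$ corresponds to $-\hat S$ in the spin-$N/2$ irrep, so the left side of $(\ast)$ is a rotation by $\pi/2$ about the $x$-axis. Its Wigner matrix elements at angle $\pi/2$ are known to equal Kravchuk functions up to the stated phases.
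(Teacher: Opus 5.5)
\textbf{Parts that are fine.} Your reduction is sound. The entrywise formula is exactly the $(k,j)$ entry of $e^{-i\pi N/4}D^{*}LD$. The identities $A=(N+1)I+2\hat K$ and $A\phi_n=(2n+1)\phi_n$ correctly turn the exponential form into $(\ast)$. The paper itself gives no argument for this statement; it only cites Prop.~16 of \cite{chauleur2024discrete}. So $(\ast)$ is where the real work lies.

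\textbf{First error: the duality is false in this convention.} Your proof of $(\ast)$ is entirely phase bookkeeping, and it goes wrong in two places. The duality $\phi_n(k)=\phi_k(n)$ does not hold with this paper's sign convention. From the explicit sum one gets $\sum_n 2^nK_n(k)X^n=(1+X)^k(1-X)^{N-k}$; the paper's displayed generating function is off by powers of $2$. Hence $G(X,Y)=(1-X+Y+XY)^N$, as you wrote. But this expression is not symmetric under $X\leftrightarrow Y$. It is invariant under $(X,Y)\mapsto(-Y,-X)$, which gives $\phi_n(k)=(-1)^{n+k}\phi_k(n)$. So $L^{T}=SLS$ with $S=\mathrm{diag}((-1)^n)$, not $L^{T}=L$. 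Already for $N=1$ you have $\phi_0(1)=1/\sqrt{2}$ but $\phi_1(0)=-1/\sqrt{2}$.

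This is not cosmetic. The left side of $(\ast)$ is symmetric in $(j,k)$. If your duality held, the right side would pick up a factor $(-1)^{j-k}$ under $j\leftrightarrow k$, and $(\ast)$ would be false. Your left-hand series $\bigl((1+X)(1+Y)-i(1-X)(1-Y)\bigr)^N$ is nevertheless correct. That is only because you silently switched to the opposite convention $(1-X)^n(1+X)^{N-n}$ in that step. Done consistently, the factor $(-1)^{j+k}$ from the correct duality flips $X,Y\mapsto -X,-Y$ and produces exactly that expression.

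\textbf{Second error: the rescaling has the wrong signs.} Your substitution $X\mapsto iX$, $Y\mapsto -iY$ turns $1+Y+X(Y-1)$ into $1+XY-i(X+Y)$, not $1+XY+i(X+Y)$. Reading off coefficients with it would give the phases $e^{i\frac{\pi}{2}(k-j-N/2)}$, which is false. For $N=1$ and $(k,j)=(1,0)$ it gives $-(1+i)/2$, while the true entry is $(1+i)/2$. The correct substitution is $X\mapsto -iX$, $Y\mapsto iY$, which produces the factor $(-i)^k i^j=e^{i\pi(j-k)/2}$.

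\textbf{How the argument closes once repaired.} Use $2^N\sqrt{\binom{N}{k}\binom{N}{j}}\,\phi_k(j)=2^{N/2}\binom{N}{j}2^kK_k(j)$ and $2^{N/2}e^{-i\pi N/4}=(1-i)^N$. Multiply both sides of $(\ast)$ by $2^N\sqrt{\binom{N}{k}\binom{N}{j}}\,X^kY^j$ and sum over $k,j$. Both sides then equal
\[
(1-i)^N\bigl(1+XY+i(X+Y)\bigr)^N .
\]
Your Wigner-matrix fallback is only a citation and leaves exactly these phases unchecked.
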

%\noindent For being self-contained, the proof is presented in \cref{sec:kt}.

\begin{remark}\label{remark:A-to-hatK}
The Hamiltonian evolution of matrix $A$ is precisely equivalent to that of the discrete harmonic oscillator $\hat{K}$ \Cref{lem:kravchuk-oscillator}, upto a diagonal phase factor $e^{-i\pi(N+1)/4}$, so eq \Cref{eq:kravchuk-transform} can be read as
\begin{align}
\begin{aligned}
\mathcal{K} &= e^{i\pi/4}  e^{-i\pi ((N+1)I+2\hat{K})/4} \\
  &=  e^{-i\pi N/4} e^{-i\pi \hat{K}/2} .
\end{aligned}
\end{align}
The dynamical information sits in $e^{-i\pi \hat{K}/2}$, the propogation of the discrete harmonic oscillator for \emph{time} $\pi/2$.
\end{remark}

\paragraph{Computational cost on a classical computer.} The factorization in \Cref{eq:kravchuk-transform} reduces a Kravchuk transform to multiplication by $e^{-i(\pi/4)A}$, the exponential of a skew-Hermitian tridiagonal matrix, which can be applied in $O(N)$ operations per matrix-vector product using Pad\'e approximations \cite{chauleur2024discrete}. The total cost is therefore $O(N)$ per matrix-vector product, compared to the $O(N^2)$ cost of applying $\mathcal{K}$ directly.\\

This theorem becomes the crux of our argument in \Cref{sec:qkravchuk}, where we show  how to re-interpret the $\mathcal{K}$ operator as a $\mathfrak{su}(2)$ evolution. In the next section we revisit how $\mathfrak{su}(2)$ elements can be evolved efficiently using a quantum circuit.
\section{Efficient Simulation of $SU(2)$ via Oscillator Representation}\label{sec: su2-simulation}

We review the decomposition of an arbitrary unitary $U \in SU(n)$ in the oscillator representation and the fast-forwarding of a Lie-algebra element $u$ to its group element, $U = e^{itu}$. Here $SU(n)$ is the Lie group of unitaries with unit determinant, and $\mathfrak{su}(n)$ its Lie algebra of traceless anti-Hermitian generators; the two are linked by the exponential map $\exp:\mathfrak{su}(n)\to SU(n)$, so every $U\in SU(n)$ arises as $U=e^{itu}$ for a traceless Hermitian $u$ (with $u\in\mathfrak{su}(n)$). We closely follow the description from \cite{iyer2026efficient}. First, we show how to generate the $\mathfrak{su}(2)$ lie algebra using two independent quantum harmonic oscillators. This mapping is based on the multi-boson creation ($a_j^\dagger$) and annihilation ($a_j$) operators which create/destroy a boson in the $j$th mode.

\paragraph{Irreps of the $\mathfrak{su}(2)$ algebra in the
Jordan-Schwinger Representation}\label{sec:su2-irrep}

Following the Jordan-Schwinger map \cite{iyer2026efficient} $S_{j,k}, A_{j,k}$ and $H_{j}$ are quadratic number preserving operators composed of $a_ja_k^{\dagger}$ 
\begin{align}
    \hat{S}_{j,k} := \tfrac{1}{2}(a_j^\dagger a_k + a_k^\dagger a_j), \quad \hat{A}_{j,k} := \tfrac{i}{2}(a_j^\dagger a_k - a_k^\dagger a_j), \quad \hat{H}_j := \tfrac{1}{2}(a_j^\dagger a_j - a_{j+1}^\dagger a_{j+1})
\end{align}
for all $1 \leq j < k \leq N$ where $N$ is the total number of bosons. Representing the $\mathfrak{su}(2)$ algebra only requires 2 oscillators, i.e. $n=2$. Therefore, $j=1, k=2$ are the only possibilities because annihilating a boson from one oscillator necessitates creating one in the other. We also use $\hat{D}$ notation for $\hat{H}_j$ from now onwards to prevent confusion with the Harmonic Oscillator operator $\hat{H}$ in \Cref{eq:QHO}\\

Let $\hat{a}_1,\hat{a}_2$ and $\hat{a}_1^{\dagger},\hat{a}_2^{\dagger}$ be
two independent pairs of bosonic annihilation and creation operators
satisfying the canonical commutation relations
\begin{equation}
    [\hat{a}_i,\hat{a}_j^{\dagger}] = \delta_{ij}, \qquad
    [\hat{a}_i,\hat{a}_j] = [\hat{a}_i^{\dagger},\hat{a}_j^{\dagger}] = 0,
    \qquad i,j\in\{1,2\}.
\end{equation}
We drop the $i,j$ subscript from the notation defining the bilinear operators as
\begin{equation}
    \hat{S} = \tfrac{1}{2}\bigl(\hat{a}_1^{\dagger}\hat{a}_2
        + \hat{a}_2^{\dagger}\hat{a}_1\bigr), \qquad
    \hat{A} = \tfrac{i}{2}\bigl(\hat{a}_2^{\dagger}\hat{a}_1
        - \hat{a}_1^{\dagger}\hat{a}_2 \bigr), \qquad
    \hat{D} = \tfrac{1}{2}\bigl(\hat{a}_1^{\dagger}\hat{a}_1
        - \hat{a}_2^{\dagger}\hat{a}_2\bigr),
    \label{eq:JS-generators}
\end{equation}
which satisfy the $\mathfrak{su}(2)$ lie algebra commutators
\begin{equation}\label{eq:JS-commutators}
    [\hat{S}, \hat{A}] = i\hat{D}, \quad [\hat{A}, \hat{D}] = i\hat{S}, \quad [\hat{D}, \hat{S}] = i\hat{A}.
\end{equation}
thereby generating the algebra using bosonic operators. The total number operator
$\hat{N} = \hat{a}_1^{\dagger}\hat{a}_1 + \hat{a}_2^{\dagger}\hat{a}_2$
commutes with each of $\{\hat{S}, \hat{A}, \hat{D}\}$. Consequently, on the eigenbasis
$\mathcal{B}_{N} = \{\,|n_1,n_2\rangle : n_1+n_2 = N\,\}$
the operators in eq.(\ref{eq:JS-generators}) furnish the irreducible spin-$j$
representation of $\mathfrak{su}(2)$ with $j = N/2$, of dimension
$2j+1 = N+1$. The $a_j^\dagger$ and $a_j$ vector act on the basis $\mathcal{B}_N$ as
\begin{align}
    a_1^\dagger \,\ket{n_1, n_2} &= \sqrt{n_1 + 1}\,\ket{n_1 + 1,\, n_2}, & a_1 \,\ket{n_1, n_2} &= \sqrt{n_1}\,\ket{n_1 - 1,\, n_2}, \\
    a_2^\dagger \,\ket{n_1, n_2} &= \sqrt{n_2 + 1}\,\ket{n_1,\, n_2 + 1}, & a_2 \,\ket{n_1, n_2} &= \sqrt{n_2}\,\ket{n_1,\, n_2 - 1}.
\end{align}
hence the basis vectors $\ket{n_1, n_2} \in \mathcal{B}_N$ are simultaneous eigenstates of $\hat{D}$ with eigenvalue $n_1-n_2$
\begin{align}
    \hat{D}\ket{n_1, n_2} = (n_1 - n_2)\ket{n_1, n_2}
\end{align}

Assuming the elements are in lexicographic descending order $\mathcal{B}_N = \{\ket{N, 0}, \ket{N-1, 1}, \dots \ket{0, N}\}$, \Cref{def:fock} and identifying the
$l$-th element with the computational basis state  $\ket{l} := \ket{N-l,\, l}$, the matrix elements $\bra{n_1', n_2'}\hat{D}\ket{n_1, n_2}$ become the entries of a matrix $\bar{D}$ acting on $\ket{l}$. Under the Schwinger identification, operators on the fock basis readily give the irreps of $\mathfrak{su}(2)$ e.g. the $(N+1)$-dimensional representation of $\hat{D}$ in the computational basis is
\begin{align}
    \Bar{D} = \frac{1}{2}
\begin{pmatrix}
N & 0 & \cdots & 0 \\
0 & N-2 & \cdots & 0 \\
\vdots & \vdots & \ddots & \vdots \\
0 & 0 & \cdots & -N
\end{pmatrix}
\end{align}
which is exactly the $\mathfrak{su}(2)$ angular momentum operator $J_z$. Note that $\hat{D}$ operator is on the abstract Fock basis $\ket{n_1, n_2}$ but the matrix $\Bar{H}$ is defined on the computational basis $\ket{l}$. Similarly, $\hat{S}$ can be represented in terms of the the raising and lowering angular momentum operators on oscillator basis
\begin{align}\label{eq:S_hat}
    \hat{S}\,\ket{n_1, n_2} &= \tfrac{1}{2}\sqrt{(n_1+1)n_2}\,\ket{n_1+1, n_2-1} + \tfrac{1}{2}\sqrt{n_1(n_2+1)}\,\ket{n_1 - 1, n_2 + 1}
\end{align}
which gives a $(N+1)-$dimensional representation in computational basis
\begin{align}\label{eq:Sbar}
\Bar{S} &= \frac{1}{2}\begin{pmatrix}
0 & \sqrt{N} & 0 & 0 & \cdots & 0 \\
\sqrt{N} & 0 & \sqrt{(N-1)2} & 0 & \cdots & 0 \\
0 & \sqrt{(N-1)2} & 0 & \sqrt{(N-2)3} & \cdots & 0 \\
0 & 0 & \sqrt{(N-2)3} & 0 & \cdots & 0 \\
\vdots & \vdots & \vdots & \vdots & \ddots & \sqrt{N} \\
0 & 0 & 0 & 0 & \sqrt{N} & 0
\end{pmatrix}.
\end{align}
The $\Bar{A}$ operator has a similar representation but picks up a negative sign on the super-diagonal. Next section gives a brief overview of the fast-forwarding algorithm in \cite{iyer2026efficient}, and mentions key results which give an efficient implementation of the $SU(n)$ group elements. We, however, only need to consider the fast-forwarding for the $SU(2)$ group and hence present the results for $n=2$ instead of a general $n \in \mathbb{N}$.

\paragraph{Fast-forwarding of SU(2).}
Last section used the Fock or harmonic oscillator basis to implement the irreducible spin-$j$ representations of $\mathfrak{su}(2)$ lie algebra. These irreps can be used to generate any element of $SU(2)$ by multiplying exponentials of their corresponding operators $\Bar{S}, \Bar{A}, \Bar{D}$. Again, we will only present the results specific to the $\Bar{S}$ operator, for the Kravchuk Tranform. 

The evolution of a generic sparse $N \times N$ Hamiltonian $H$ for time $t$ can be simulated using 
$\mathcal{O}(\|Ht\|\,\mathrm{poly}(\log N))$ operations. This bound is optimal due to a no-fast-forwarding theorem of \cite{berry2007efficient}. Hence, the cost of simulating the unitary $e^{i\Bar{S}t}$ with generic techniques on a quantum computer scales at least linearly in $\|\Bar{S}\|$. The spectral norm $\|\Bar{S}\|$ grows linearly with $N$
\begin{align*}
    \|\Bar{S}\| = N/2 = \Theta(N),
\end{align*}
which renders a naive simulation exponentially costly in the number of qubits used to encode the Fock states. To encode the $(N+1)$-dimensional Fock subspace $\mathcal{B}_N$ on a
quantum computer, we need $q = \Theta(\log N)$ qubits. Achieving a polynomial depth circuit, therefore requires exploiting the $\mathfrak{su}(2)$ lie algebraic structure to fast-forward the simulation.\\

To resolve this we use the approach introduced in \cite{iyer2026efficient}, the first step is to notice that $\hat{S}$ can be expressed in terms of the position operators $\hat{x}_j$ and momentum operators $\hat{p}_j$ of the continuous Harmonic Oscillator, detailed in \Cref{lem:xp-decomposition}. Once we have a representation of $\hat{S}$ in the oscillator basis, applying the Quantum Hermite Transform in \cite{jain2025efficient} facilitates the fast-forwarding of quadratic position and momentum operators $\hat{x}_1\hat{x}_2$ and $\hat{p}_1\hat{p}_2$ in the oscillator basis. The quadratic operators $\hat{x}_j\hat{x}_k$ or $\hat{p}_j\hat{p}_k$ are
\begin{align}\label{eq:quad-ops}
    \hat{x}_1\hat{x}_2 = (\hat{x} \otimes I)(I \otimes \hat{x}) \; \text{ and } \;
    \hat{p}_1\hat{p}_2 = (\hat{p} \otimes I)(I \otimes \hat{p})
\end{align}
where position operator $\hat{x}$ is diagonal in the position basis, $\hat{x}\ket{x} = x\ket{x}$. The momentum operator $\hat{p}$ is related to $\hat{x}$ with a fourier transform $\hat{F}$ on $\mathbb{R}$, since the fourier transform changes from position to momentum basis 
\begin{align}
    \hat{p} = \hat{F}^{-1}\hat{x}\hat{F}.
\end{align}

\begin{lem}[adapted from Lemma 2.2, \cite{iyer2026efficient}]\label{lem:xp-decomposition} 
Let $\varsigma, \vartheta, \phi \in [-\pi/2, \pi/2]$, $n \geq 2$, and consider the operators $\hat{H}_i$, $\hat{S}_{j,k}$, and $\hat{A}_{j,k}$ defined as:
\begin{align}
    \hat{H}_i &= \frac{1}{2} \left( (\hat{x}_i)^2 + (\hat{p}_i)^2 - (\hat{x}_{i+1})^2 - (\hat{p}_{i+1})^2 \right), \\
    \hat{S}_{j,k} &= \frac{1}{2} (\hat{x}_j \hat{x}_k + \hat{p}_j \hat{p}_k), \\
    \hat{A}_{j,k} &= \frac{1}{2} (\hat{p}_j \hat{x}_k - \hat{x}_j \hat{p}_k),
\end{align}
where $1 \leq i \leq n-1$ and $1 \leq j < k \leq n$. Then,
\begin{align}
    e^{i\varsigma\hat{H}_i} &= e^{i\varsigma_1(\hat{p}_i)^2} e^{i\varsigma_2(\hat{x}_i)^2} e^{i\varsigma_1(\hat{p}_i)^2} e^{i\varsigma_1(\hat{p}_{i+1})^2} e^{i\varsigma_2(\hat{x}_{i+1})^2} e^{i\varsigma_1(\hat{p}_{i+1})^2}, \\
    e^{i\vartheta\hat{S}_{j,k}} &= e^{i\vartheta_1\hat{p}_j\hat{p}_k} e^{i\vartheta_2\hat{x}_j\hat{x}_k} e^{i\vartheta_1\hat{p}_j\hat{p}_k}, \\
    e^{i\varphi\hat{A}_{j,k}} &= e^{-i\varphi_1\hat{x}_j\hat{p}_k} e^{i\varphi_2\hat{p}_j\hat{x}_k} e^{-i\varphi_1\hat{x}_j\hat{p}_k},
\end{align}
where $\varsigma_1 = \tan(\varsigma/2)/2$, $\varsigma_2 = \sin(\varsigma)/2$, $\vartheta_1 = \tan(\vartheta/4)$, $\vartheta_2 = \sin(\vartheta/2)$, $\varphi_1 = \tan(\varphi/4)$, and $\varphi_2 = \sin(\varphi/2)$.
\end{lem}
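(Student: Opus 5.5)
The plan is to prove all three identities at the level of the symplectic group and then lift to the unitaries through the metaplectic representation.

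Every operator in the statement is a homogeneous quadratic polynomial in the canonical pairs $(\hat{x}_1,\dots,\hat{x}_n,\hat{p}_1,\dots,\hat{p}_n)$. For such a quadratic $Q$, conjugation by $e^{itQ}$ acts linearly on the vector $\hat{r}=(\hat{x},\hat{p})^T$:
\begin{align*}
e^{itQ}\,\hat{r}\,e^{-itQ}=M_Q(t)\,\hat{r},\qquad M_Q(t)\in Sp(2n,\mathbb{R}).
\end{align*}
The map $e^{itQ}\mapsto M_Q(t)$ is the metaplectic representation, which is a (projective, in fact double-cover) homomorphism. So each identity becomes two tasks: an identity between real symplectic matrices, and a phase-fixing argument. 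Each identity involves at most two modes, so the matrices are at most $4\times 4$.

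\textbf{Computing the matrices.} I would use the canonical commutators $[\hat{x}_j,\hat{p}_k]=i\delta_{jk}$ in the Heisenberg equations of motion.
\begin{itemize}
\item For $e^{i\vartheta_1\hat{p}_j\hat{p}_k}$, the commutator $[\hat{p}_j\hat{p}_k,\hat{x}_j]=-i\hat{p}_k$ gives the unipotent shear $\hat{x}_j\mapsto \hat{x}_j+\vartheta_1\hat{p}_k$ and $\hat{x}_k\mapsto\hat{x}_k+\vartheta_1\hat{p}_j$, with the momenta fixed.
\item The factor $e^{i\vartheta_2\hat{x}_j\hat{x}_k}$ gives the dual shear on the momenta.
\item The generator $\hat{S}_{j,k}$ generates a coupled rotation. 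In the rotated coordinates $(\hat{x}_j\pm\hat{x}_k)/\sqrt{2}$ and $(\hat{p}_j\pm\hat{p}_k)/\sqrt{2}$ it decouples into two single-mode phase-space rotations by angles $\pm\vartheta/2$.
\end{itemize}
In those coordinates each shear also acts block-diagonally, as a $2\times2$ shear with coefficient $\pm\vartheta_1$ or $\pm\vartheta_2$. The claim therefore reduces to the classical $2\times 2$ shear factorisation of a rotation:
\begin{align*}
\begin{pmatrix}\cos\alpha & \sin\alpha\\ -\sin\alpha & \cos\alpha\end{pmatrix}
=\begin{pmatrix}1&\tan(\alpha/2)\\0&1\end{pmatrix}\begin{pmatrix}1&0\\-\sin\alpha&1\end{pmatrix}\begin{pmatrix}1&\tan(\alpha/2)\\0&1\end{pmatrix}.
\end{align*}
With $\alpha=\vartheta/2$ this gives $\vartheta_1=\tan(\vartheta/4)$ and $\vartheta_2=\sin(\vartheta/2)$, as stated.

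\textbf{The other two generators.}
\begin{itemize}
\item $\hat{A}_{j,k}$ is treated identically. The mixed factors $\hat{x}_j\hat{p}_k$ and $\hat{p}_j\hat{x}_k$ generate shears that mix $\hat{x}_j$ with $\hat{x}_k$ and $\hat{p}_k$ with $\hat{p}_j$, and $\hat{A}_{j,k}$ generates a rotation in the $(\hat{x}_j,\hat{x}_k)$ plane together with the same rotation in the $(\hat{p}_j,\hat{p}_k)$ plane. The same $2\times2$ factorisation then applies.
\item $\hat{H}_i$ is a difference of single-mode oscillators. Its exponential factorises as a commuting product over modes $i$ and $i+1$, and for each mode $\tfrac12(\hat{x}^2+\hat{p}^2)$ generates a rotation by $\varsigma$. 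The shear factorisation with $\varsigma_1=\tan(\varsigma/2)/2$ and $\varsigma_2=\sin(\varsigma)/2$ gives the result; the factor $\tfrac12$ comes from $[\hat{p}^2,\hat{x}]=-2i\hat{p}$.
\item For mode $i+1$ the generator carries a minus sign. Since $\tan$ and $\sin$ are odd, this amounts to replacing $\varsigma\mapsto-\varsigma$ in that mode's factors, and I would track this sign convention carefully.
\end{itemize}
The hypothesis $|\vartheta|,|\varsigma|,|\varphi|\le\pi/2$ keeps all the half- and quarter-angle tangents finite.

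\textbf{Main obstacle: the lift.} Equality of the symplectic matrices only gives equality of the unitaries up to a phase, coming from the kernel $\pm1$ of $Mp\to Sp$ and from the projective nature of the representation. I would fix the phase by continuity. Both sides are strongly continuous one-parameter families in $\vartheta$ that equal $I$ at $\vartheta=0$. Their ratio $e^{i\theta(\vartheta)}I$ is therefore continuous and equals $1$ at the origin.

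It remains to show the ratio is identically $1$. The first option is to use that the metaplectic lift of a continuous path starting at the identity is unique. A more hands-on alternative is to compare both sides on the two-mode vacuum, which is a Gaussian. There each factor acts by explicit Gaussian formulas, and the overall scalar can be read off from the normalisation and the argument of a determinant; it turns out to be $1$ because every generator used is symmetrically ordered.
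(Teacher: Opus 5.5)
The paper gives no proof of this lemma; it imports it from Lemma 2.2 of \cite{iyer2026efficient}. Your route is the natural self-contained argument: reduce each identity to a $2\times2$ shear factorisation of a phase-space rotation, then lift. It works for the $\hat{S}_{j,k}$ and $\hat{A}_{j,k}$ identities.

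\emph{The $\hat{S}_{j,k}$ case.} Write $\hat u,\hat v=(\hat x_j\pm\hat x_k)/\sqrt2$ and $\hat P_u,\hat P_v=(\hat p_j\pm\hat p_k)/\sqrt2$. Then $\hat p_j\hat p_k=\tfrac12(\hat P_u^2-\hat P_v^2)$, $\hat x_j\hat x_k=\tfrac12(\hat u^2-\hat v^2)$ and $\hat S_{j,k}=\tfrac14(\hat u^2+\hat P_u^2)-\tfrac14(\hat v^2+\hat P_v^2)$. So the sign flip on the $v$-mode is carried by the generator and by both shears at once, and the $v$-mode identity is the $u$-mode one at $-\vartheta$.

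\emph{The $\hat{A}_{j,k}$ case.} The momentum-plane product is forced to be the inverse transpose of the position-plane product. For a rotation that is the same rotation, so checking one plane suffices.

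\emph{The lift.} Continuity suffices because the Weyl-symmetric quadratics close under commutators with no central term. Hence the ratio of the two sides lies in $\{\pm1\}$, and there is no continuous ``projective'' phase to fix. Your vacuum check also works.

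The real problem is the $\hat{H}_i$ identity: as stated it is false, and your own sign remark shows it. Carrying out your replacement $\varsigma\mapsto-\varsigma$ on mode $i+1$ gives
\begin{align*}
e^{i\varsigma\hat{H}_i} = e^{i\varsigma_1\hat p_i^2}e^{i\varsigma_2\hat x_i^2}e^{i\varsigma_1\hat p_i^2}\, e^{-i\varsigma_1\hat p_{i+1}^2}e^{-i\varsigma_2\hat x_{i+1}^2}e^{-i\varsigma_1\hat p_{i+1}^2},
\end{align*}
with minus signs on all three mode-$(i+1)$ exponents, whereas the statement has plus signs. Unlike the $\hat S$ case, the stated mode-$(i+1)$ shears do not carry the generator's sign.

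A first-order check settles this independently of conventions. Differentiate the stated right-hand side at $\varsigma=0$, using $\varsigma_1'(0)=1/4$ and $\varsigma_2'(0)=1/2$. This gives $\tfrac{i}{2}(\hat x_i^2+\hat p_i^2+\hat x_{i+1}^2+\hat p_{i+1}^2)$, not $i\hat H_i$.

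In fact, by your own lift argument, the stated right-hand side equals $e^{i\frac{\varsigma}{2}(\hat x_i^2+\hat p_i^2+\hat x_{i+1}^2+\hat p_{i+1}^2)}$ exactly. That operator multiplies $|0,0\rangle$ by $e^{i\varsigma}$ and $|0,1\rangle$ by $e^{2i\varsigma}$. By contrast, $e^{i\varsigma\hat H_i}$, with $\hat H_i=\hat a_i^\dagger\hat a_i-\hat a_{i+1}^\dagger\hat a_{i+1}$, multiplies them by $1$ and $e^{-i\varsigma}$. So for $\varsigma\neq0$ the discrepancy is not even a global phase.

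Your step ``the shear factorisation \ldots\ gives the result'' therefore cannot be completed. You should say that this part of the lemma needs the corrected signs above, and prove that version. The $\hat S_{j,k}$ identity is the only one the paper actually uses later (at $\vartheta=\pi/2$), and it is unaffected.
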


Implementing $e^{i\vartheta \hat{S}}$ requires a basis in which $\hat{x}$ and $\hat{p}$ act efficiently. Following [eq.~(35), \cite{iyer2026efficient}], we identify the abstract Fock state $\ket{m_1, m_2}$ with the position-space wavefunction of two continuous harmonic oscillators. Note that
the \emph{oscillator basis} $\ket{\psi_m}$ is just the position basis weighted by the hermite wavefunctions $\phi_n(x)$. For a single mode,
\begin{align}\label{eq:fock-defn}
    \ket{m} \coloneqq \ket{\psi_m^c} = \int dx \, \psi_m(x)\, \ket{x},
\end{align}
so that the two-mode state lives in $(\mathbb{C}^L)^{\otimes 2}$, of
dimension $L^2$:
\begin{align}
    \ket{\psi_{m_1}^c}\ket{\psi_{m_2}^c}
    = \left( \int dx_1\, \psi_{m_1}(x_1)\, \ket{x_1} \right)
      \otimes
      \left( \int dx_2\, \psi_{m_2}(x_2)\, \ket{x_2} \right).
\end{align}

The position operator $\hat{x}$ is diagonal in this basis, so $e^{i\theta\hat{x}}$ acts by simple multiplication:
\begin{align}
    e^{i\theta\hat{x}}\ket{\psi_m^c}
    = \int dx \, \psi_m(x)\, e^{i\theta\hat{x}}\ket{x}
    = \int dx \, \psi_m(x)\, e^{i\theta x}\ket{x}.
\end{align}
Momentum evolution is obtained by conjugating with the Fourier transform $\hat{F}$, since $\hat{p} = \hat{F}^{-1}\hat{x}\hat{F}$:
\begin{align}
    e^{i\theta\hat{p}} = \hat{F}^{-1}\, e^{i\theta\hat{x}}\, \hat{F},
    \qquad
    e^{i\theta\hat{p}}\ket{\psi_m^c}
    = \hat{F}^{-1} \int dx \, \psi_m(x)\, e^{i\theta\hat{x}}\, \hat{F}\ket{x}.
\end{align}

To implement $e^{i\vartheta\hat{S}}$ we use the factorization
\begin{align}
    e^{i\theta\hat{S}}
    = e^{i\theta_1 \hat{p}_1\hat{p}_2}\,
      e^{i\theta_2 \hat{x}_1\hat{x}_2}\,
      e^{i\theta_1 \hat{p}_1\hat{p}_2},
\end{align}
and track the action of each quadratic operator from \Cref{eq:quad-ops} on the oscillator basis. The coupling term $e^{i\theta\hat{x}_1\hat{x}_2}$, being diagonal in position, acts as
\begin{align}
\begin{aligned}
    e^{i \theta \hat{x}_1 \hat{x}_2}\ket{\psi_{m_1}}\ket{\psi_{m_2}} &= (\int dx_1 \psi_{m_1}(x_1) e^{i \theta \hat{x}}\ket{x_1}) \otimes (\int dx_2 \psi_{m_2}(x_2) e^{i \theta \hat{x}}\ket{x_2})\\
    &= (\int dx_1 \psi_{m_1}(x_1) e^{i \theta x_1}\ket{x_1}) \otimes (\int dx_2 \psi_{m_2}(x_2) e^{i \theta x_2}\ket{x_2}),
\end{aligned}
\end{align}
where the eigenvalue $e^{i\theta x_1 x_2}$ follows directly from diagonality in the position basis. This diagonal efficiency goes for momentum operators as well, albeit with a fourier transform conjugation which also admits an efficient implementation. So far, we've seen how the $\hat{S}$ operator can be efficiently implemented on a continuous oscillator basis. However, for practical realizations on a quantum circuit, we truncate the continuous space to a finite dimensional Hilbert Space $(\mathbb{C}^L)^{\otimes 2}$ of dimension $L^2$. The discrete Quantum Hermite Transform is precisely that truncation and the corresponding errors incurred due to discretization are analyzed in \cite{iyer2026efficient}.

\paragraph{Quantum Hermite Transform \cite{jain2025efficient}.}
The dimension $L$ of the centered $L$-dimensional discrete Fourier transform needs to be set according to the precision requirements, as it determines the discretization size $\sqrt{2\pi/L}$ and the quality of the approximation. For an arbitrary integer $m \ge 0$, the discrete Hermite states for a single discrete quantum Harmonic oscillator can be defined via
\begin{equation}\label{eq:hermite_state}
    |\psi_m\rangle := \left(\frac{2\pi}{L}\right)^{1/4} \sum_{j=-L/2}^{L/2-1} \psi_m(x_j) |j\rangle \ ,
\end{equation}
where $\psi_m(x)$ is the $m^{\text{th}}$ Hermite function and $x_j := j\sqrt{\frac{2\pi}{L}}$ is a point in the discretized space.
\begin{remark}
     Within the low-energy subspace defined by $m \le cL$ (where constant $c < 1$), these discrete Hermite states approximate the eigenstates of the discrete quantum harmonic oscillator and recover the fundamental properties of continuous Hermite states \cite{Som16}.
\end{remark}

\begin{thm}[Quantum Hermite Transform, \cite{jain2025efficient}]\label{thm:qht}
    Let $M > 0$ be the dimension of the subspace for the Hermite transform and $\epsilon > 0$ the error. Then, there exists a quantum circuit of complexity $\mathcal{O}((\log M + \log(1/\epsilon))^3 \times \log(1/\epsilon))$ that can perform the following map with error $\epsilon$:
    \begin{equation}
        \sum_{m=0}^{M-1} \alpha_m |m\rangle \mapsto \sum_{m=0}^{M-1} \alpha_m |\psi_m\rangle \ .
    \end{equation}
    The coefficients $\alpha_m \in \mathbb{C}$ are arbitrary and normalized, i.e., $\sum_m |\alpha_m|^2 = 1$. The Hermite states $|\psi_m\rangle$ are defined as \Cref{eq:hermite_state} and are of dimension $L = \mathcal{O}(M^{2.25}/\epsilon^{3.25})$.
\end{thm}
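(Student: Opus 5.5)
Since this is the cited result of \cite{jain2025efficient}, I would reconstruct a proof by treating the discrete Hermite states $\ket{\psi_m}$ of \Cref{eq:hermite_state} as approximate eigenvectors of a discretized oscillator that can be simulated fast, and then build the basis change $\ket{m}\mapsto\ket{\psi_m}$ coherently in $m$. I expect the last part to be the main obstacle.

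\textbf{Step 1: approximation.} On the grid $x_j=j\sqrt{2\pi/L}$, let $X=\mathrm{diag}(x_j)$ and $P=F^{-1}XF$, where $F$ is the centered DFT. Set $H_L=\tfrac12(X^2+P^2)$. I would show that for $m<M$:
\begin{itemize}
\item the states $\ket{\psi_m}$ are $\epsilon$-close to orthonormal;
\item $H_L\ket{\psi_m}\approx (m+\tfrac12)\ket{\psi_m}$;
\item $F\ket{\psi_m}\approx(-i)^m\ket{\psi_m}$.
\end{itemize}
The argument uses Poisson summation, together with Gaussian-type decay of $\psi_m$ outside $|x|\lesssim\sqrt{2m+1}$ and of its Fourier transform. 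Aliasing errors are controlled once $L/(2\pi)$ exceeds the classical turning point by a margin polynomial in $\log(1/\epsilon)$. Tracking these estimates carefully is what should produce the stated dimension $L=\mathcal{O}(M^{2.25}/\epsilon^{3.25})$.

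\textbf{Step 2: fast-forwarding.} I would fast-forward $e^{-itH_L}$ with the single-mode analogue of \Cref{lem:xp-decomposition}:
\begin{align*}
e^{-itH}=e^{-i\tan(t/2)P^2/2}\,e^{-i\sin(t)X^2/2}\,e^{-i\tan(t/2)P^2/2}.
\end{align*}
This identity is exact for the continuous metaplectic representation. On the grid it holds on the low-energy subspace up to the same aliasing error as in Step 1. Each factor is a diagonal phase, possibly conjugated by the QFT, so it costs $\mathcal{O}(\mathrm{poly}(\log L,\log(1/\epsilon)))$ gates, independent of $t$.

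\textbf{Step 3: identifying $m$.} With Step 2 in hand, quantum phase estimation on $e^{-iH_L}$ computes $m$ from $\ket{\psi_m}$ coherently. The eigenvalue gap is $1$, so $\mathcal{O}(\log M+\log(1/\epsilon))$ bits of precision and $\mathcal{O}(\log(1/\epsilon))$-fold boosting suffice. This gives an approximate unitary $\ket{\psi_m}\ket{0}\mapsto\ket{\psi_m}\ket{m}$.

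\textbf{Step 4: coherent preparation (main obstacle).} The remaining task is the forward direction $\ket{m}\ket{0}\mapsto\ket{m}\ket{\psi_m}$. Composing it with the inverse of Step 3 uncomputes the label and yields the map in the statement.

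My first attempt would be Grover--Rudolph-style state preparation controlled on $m$. This needs the dyadic masses $\sum_{x_j\in I}|\psi_m(x_j)|^2$ to be reversibly computable in $\mathrm{poly}(\log M,\log(1/\epsilon))$ arithmetic, which I would try to obtain from uniform Plancherel--Rotach asymptotics.

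If that fails, the fallback avoids per-$m$ preparation. Start from the uniform-in-$m$ Gaussian-like reference state $\ket{\psi_0}$ and the register $\ket{m}$. Then use the Fourier eigen-relation $F\ket{\psi_m}\approx(-i)^m\ket{\psi_m}$, together with eigenvalue filtering by QSVT on the phase-estimated $H_L$, to project onto the $\ket{\psi_m}$ eigenspace conditioned on $m$.

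This filtering route needs a starting state with non-negligible overlap with every $\ket{\psi_m}$, so that amplitude amplification costs only polylogarithmically. Controlling that overlap, and the accumulated discretization error, is where I expect the real difficulty and the cubic-log gate count to arise.
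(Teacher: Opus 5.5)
The paper does not prove this statement. It is quoted from \cite{jain2025efficient}, so your proposal has to stand on its own. It does not: there is a genuine gap exactly where you expect one. Steps 2 and 3 are sound in outline. The $\tan/\sin$ splitting is the single-mode version of \Cref{lem:xp-decomposition}. The low-energy spectrum is $m+\tfrac12$ up to exponentially small error, so phase estimation with $O(\log M+\log(1/\epsilon))$ bits does recover $m$ coherently. Together, however, these steps only give the easy direction $\ket{\psi_m}\ket{0}\mapsto\ket{\psi_m}\ket{m}$. The whole content of the theorem is Step 4, the coherent map $\ket{m}\ket{0}\mapsto\ket{m}\ket{\psi_m}$, and neither of your routes supplies it.

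\textbf{The filtering fallback fails as stated.} Up to discretization error, $\ket{\psi_0}$ is an eigenvector of $H_L$ orthogonal to every $\ket{\psi_m}$ with $m\ge 1$. Filtering it onto the $m$-th eigenspace therefore returns essentially nothing. The relation $F\ket{\psi_m}\approx(-i)^m\ket{\psi_m}$ does not help, since it only resolves $m \bmod 4$.

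More fundamentally, no $m$-independent starting state can work. For any fixed $\ket{\phi}$ we have $\sum_{m<M}|\langle\psi_m|\phi\rangle|^2\lesssim 1$. Hence some $m<M$ has overlap at most about $M^{-1/2}$, and amplitude amplification then costs $\Omega(\sqrt{M})$ rounds, which is exponential in the number of qubits.

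What is missing is a starting state prepared coherently as a function of $m$, whose overlap with $\ket{\psi_m}$ is bounded below uniformly in $m$. One candidate is a semiclassical state on $|x|\le\sqrt{2m+1}$ with arcsine-law amplitude and WKB phase. You would need to prove both the overlap bound and that the state can be prepared in polylogarithmic cost. The proposal contains neither.

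\textbf{The Grover--Rudolph route rests on an unproven claim.} The dyadic masses $\sum_{x_j\in I}\psi_m(x_j)^2$ have no closed form. Plancherel--Rotach asymptotics give only leading-order values, with Airy-type corrections near the turning points and no control for small $m$. They do not give $\epsilon$-accurate values in $\mathrm{poly}(\log M,\log(1/\epsilon))$ arithmetic.

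\textbf{The grid size is not explained.} Your Step 1 cannot produce $L=\mathcal{O}(M^{2.25}/\epsilon^{3.25})$. The aliasing errors there are $e^{-\Omega(L)}$ for $m\le cL$, as in \Cref{lm:discretization}. That would only force $L=\mathcal{O}(M+\log(1/\epsilon))$, so nothing in your argument accounts for the polynomial dependence on $1/\epsilon$ in the statement. Likewise, your gate count omits the cost of the preparation step you leave open.
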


\paragraph{Discretization.} The mapping established by the quantum Hermite transform enables the fast-forwarding simulation of $SU(2)$. Although the foregoing discussion of efficient $SU(2)$ simulation applies to continuous space, practical deployment necessitates approximation via discretization. To this end, we review below the needed results showing that a sufficiently high-dimensional discrete quantum harmonic oscillator can emulate the continuum dynamics to arbitrary accuracy.
\begin{lem}[Adapted from Lemma 3.2 and 3.4, \cite{iyer2026efficient}]\label{lm:discretization}
Consider the following two specific forms for $\hat{W}$ in oscillator representation, and let $\overline{W}$ be the corresponding discretization:
\begin{enumerate}
    \item $\hat{W} = \exp(i\vartheta \hat{x}_j \hat{x}_k)$, with discretization $\overline{W} = \exp(i\vartheta \overline{x}_j \overline{x}_k)$;
    \item $\hat{W} = \exp(i\vartheta \hat{p}_1 \hat{p}_2)$, with discretization $\overline{W}= \exp(i\vartheta \overline{p}_1 \overline{p}_2)$.
\end{enumerate}

Let the action of the continuous unitary on continuous Fock states be given by the expansion $\hat{W}|\psi^c_{m_j}, \psi^c_{m_k}\rangle = \sum_{m'_j, m'_k=0}^{\infty} \alpha_{m'_j m'_k} |\psi^c_{m'_j}, \psi^c_{m'_k}\rangle$. Then, the following results hold for the action of the discretized unitary $\overline{W}$:

\begin{itemize}
    \item For case (1), we obtain the exact equality:
    \begin{equation}
        \overline{W}|\psi_{m_j}, \psi_{m_k}\rangle = \sum_{m'_j, m'_k=0}^{\infty} \alpha_{m'_j m'_k} |\psi_{m'_j}, \psi_{m'_k}\rangle.
    \end{equation}

    \item For case (2), let the phase be $\vartheta \in [-1/2e, 1/2e]$. Then, for all $m_j, m_k \le c'L$, where $c' < 1$ and $\bar{\gamma} > 0$ are positive constants, we obtain the error bound:
    \begin{equation}
        \left\| \overline{W}|\psi_{m_j}, \psi_{m_k}\rangle - \sum_{m'_j, m'_k=0}^{\infty} \alpha_{m'_j m'_k} |\psi_{m'_j}, \psi_{m'_k}\rangle \right\| \le \exp(-\bar{\gamma} L).
    \end{equation}
\end{itemize}
\end{lem}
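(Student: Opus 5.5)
The plan is to treat the two cases separately. Case (1) should follow from the fact that a diagonal operator commutes with sampling. Case (2) should reduce to case (1) after conjugating by the Fourier transform, with an exponentially small aliasing error. Throughout, $\ket{\psi_m}$ is the sampled Hermite state of \Cref{eq:hermite_state}, so the map $f\mapsto (2\pi/L)^{1/4}\sum_j f(x_j)\ket{j}$ is the object to track.

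\emph{Case (1).} In the continuum, $\hat W=e^{i\vartheta\hat x_j\hat x_k}$ acts on the product wavefunction by pointwise multiplication:
\[
g(x_1,x_2):=e^{i\vartheta x_1x_2}\psi_{m_j}(x_1)\psi_{m_k}(x_2).
\]
The discretized $\overline W$ multiplies the amplitude at grid point $(x_{j_1},x_{j_2})$ by $e^{i\vartheta x_{j_1}x_{j_2}}$. Therefore $\overline W\ket{\psi_{m_j},\psi_{m_k}}$ is exactly the sampling of $g$ on the grid.

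The function $g$ is a Schwartz function on $\mathbb R^2$. Hence its two-dimensional Hermite expansion $\sum\alpha_{m'_jm'_k}\psi_{m'_j}\psi_{m'_k}$ converges absolutely and uniformly, because the coefficients decay faster than any polynomial and $\|\psi_m\|_\infty\le 1$. Uniform convergence allows sampling term by term. This gives the claimed identity as an exact equality of vectors in $(\mathbb C^L)^{\otimes2}$. No orthogonality of the discrete states is needed.

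\emph{Case (2).} Write $e^{i\vartheta\hat p_1\hat p_2}=(\hat F^{-1})^{\otimes2}e^{i\vartheta\hat x_1\hat x_2}\hat F^{\otimes 2}$ and $\overline W=(F_L^{-1})^{\otimes2}e^{i\vartheta\bar x_1\bar x_2}F_L^{\otimes2}$, where $F_L$ is the centered DFT. The proof then has three steps.

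\begin{enumerate}
\item \textbf{Aliasing lemma.} Show that $\|F_L\ket{\psi_m}-(-i)^m\ket{\psi_m}\|\le e^{-\gamma L}$ uniformly for $m\le c'L$.
\begin{itemize}
\item By Poisson summation, $F_L$ applied to samples of $\psi_m$ equals the samples of $\hat F\psi_m=(-i)^m\psi_m$ plus periodized copies shifted by multiples of $\sqrt{2\pi L}$.
\item For $m\le c'L$ with $c'$ small, $\psi_m$ is concentrated in $|x|\le\sqrt{2m+1}<\tfrac12\sqrt{2\pi L}$.
\item Beyond the turning point $\psi_m$ has Gaussian tails, so the shifted copies contribute $e^{-\Omega(L)}$.
\end{itemize}
The same estimate holds for $F_L^{-1}$.

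\item \textbf{Middle factor.} Apply case (1) to the middle factor. This is exact and produces a Hermite expansion with coefficients $\beta_{m'_jm'_k}$. These coefficients are related to $\alpha$ by the phases $(-i)^{m}$ and $i^{m'}$.

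\item \textbf{Truncation and reassembly.} Truncate this expansion at $m'_j,m'_k\le c'L$ and apply step 1 to each retained term to undo $F_L^{-1}$. This requires controlling $\sum_{m'> c'L}|\beta_{m'}|$ together with the number of retained terms times $e^{-\gamma L}$.
\end{enumerate}

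The main obstacle is the decay of the high-order coefficients $\beta_{m'_jm'_k}$ of the chirped product $e^{i\vartheta x_1x_2}\psi_{m_j}\psi_{m_k}$ uniformly in $m_j,m_k\le c'L$. The operator $e^{i\vartheta x_1x_2}$ does not preserve particle number, so these coefficients are genuinely spread out. This is where the hypothesis $|\vartheta|\le 1/2e$ must enter.

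I would first bound the matrix elements $\bra{m'}e^{i\vartheta x_1x_2}\ket{m}$ by expanding $x_1x_2=\tfrac14(a_1+a_1^\dagger)(a_2+a_2^\dagger)$ in a power series. The $r$-th term changes each occupation number by at most $r$ and has size at most $(|\vartheta|\,\mathrm{poly}(m+r))^r/r!$. Stirling's formula then gives a geometric ratio of roughly $e|\vartheta|\cdot O(1)<1$ once $|\vartheta|\le1/2e$. Reaching occupation $m'>c'L$ from $m$ requires $r\gtrsim m'-m$. If $c'$ is chosen smaller than the input cutoff so that this gap is $\Omega(L)$, the tail is $e^{-\Omega(L)}$.

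The remaining risk is keeping the input cutoff and the output truncation level compatible with each other. After that, the final constants $c',\bar\gamma$ come from combining the aliasing bound with the tail bound. The $\mathrm{poly}(L)$ number of retained terms is absorbed by slightly shrinking $\bar\gamma$.
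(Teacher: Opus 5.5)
The paper gives no proof of this lemma; it is imported from Lemmas 3.2 and 3.4 of Iyer et al., so there is no in-text argument to match. Your architecture is the natural one, and most of it is sound.

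Case (1) is correct as you argue it. In case (2), four pieces go through:
\begin{enumerate}
\item the Fourier conjugation;
\item the aliasing estimate $F_L\ket{\psi_m}\approx(-i)^m\ket{\psi_m}$ for $m\le c''L$ with $c''<\pi/4$, since the grid half-width is $\sqrt{\pi L/2}$. You should also account for truncating the lattice sum to $L$ points, but that is the same Gaussian-tail bound;
\item the exact middle factor;
\item the phase bookkeeping between $\beta$ and $\alpha$.
\end{enumerate}
Below, $c''$ denotes this aliasing/output cutoff and $c'$ the input cutoff. The gap is in the step you yourself flagged: as written, the tail estimate does not close.

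\textbf{Why the tail estimate fails as written.} With the paper's convention $\hat x=(a+a^\dagger)/\sqrt2$, one has $\hat x_1\hat x_2=\tfrac12(a_1+a_1^\dagger)(a_2+a_2^\dagger)$, not $\tfrac14$. This is not cosmetic. The natural bound is $\|(\hat x_1\hat x_2)^r\ket{m_1,m_2}\|\le 2^r(M+r)^r$ with $M=\max_i m_i$. Combined with Stirling, this makes the $r$-th Taylor term of size $(2e|\vartheta|)^r(1+M/r)^r$. Two problems follow.
\begin{enumerate}
\item At $|\vartheta|=1/2e$ the base is exactly $1$, so there is no geometric decay at the endpoint. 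Your $\tfrac14$ is what made the base look like $e|\vartheta|\le\tfrac12$.
\item For every $\vartheta$, the factor $(1+M/r)^r$ is $e^{\Theta(L)}$ when $M,r=\Theta(L)$. So an output/input gap that is merely $\Omega(L)$ does not give an $e^{-\Omega(L)}$ tail.
\end{enumerate}

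\textbf{The fix: keep the factorials.} Every length-$r$ word in $a,a^\dagger$ has coefficient at most $\sqrt{(m+r)!/m!}$ on $\ket{m}$. Hence $\|\hat x^r\ket{m}\|\le 2^{r/2}\sqrt{(m+r)!/m!}$, and the $r$-th term is at most $(2|\vartheta|)^r\bigl(\binom{m_1+r}{r}\binom{m_2+r}{r}\bigr)^{1/2}\le(4|\vartheta|)^r\,2^{(m_1+m_2)/2}$. Here $4|\vartheta|\le 2/e<1$ on the whole closed interval. This also shows that the Taylor series converges in $L^2$, which you need before computing $\beta$ termwise for an unbounded generator.

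\textbf{Completing the count.} Components with $\max_i m'_i>c''L$ require $r\ge(c''-c')L$. At order $r$ there are at most $(r+1)^2$ nonzero components, so the $\ell^1$ norm costs only an extra factor $r+1$. Therefore
\[
\sum_{m'\notin T}|\beta_{m'}|\lesssim L\,(2/e)^{(c''-c')L}\,2^{c'L}.
\]
This is exponentially small provided $c'<(1-\ln 2)\,c''$. So the input cutoff must be a fixed fraction of the output cutoff, not just $\Omega(L)$ below it. The same bound controls the tail of the target sum, since $|\alpha_{m'}|=|\beta_{m'}|$.

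\textbf{From $\ell^1$ bounds to norm bounds.} You need a uniform bound on the discrete states, because they are not orthonormal for $m'\gtrsim L$. The bound $\|\psi_m\|_\infty\le 1$ you already used gives $\|\ket{\psi_m}\|\le(2\pi L)^{1/4}$. This is a polynomial factor that can be absorbed into $\bar\gamma$.
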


\paragraph{Overview of steps.} A more general procedure for simulating $SU(n)$ is presented in Section 1.2 of \cite{iyer2026efficient}. As for our case, we summarize the procedure for efficient simulation of $SU(2)$ as follows:
\begin{enumerate}
    \item Map from computational basis to Fock basis;
    \item Implement Hermite basis as Fock basis;
    \item Perform $\exp(i\vartheta \overline{x}_j \overline{x}_k)$ and $\exp(i\vartheta \overline{p}_1 \overline{p}_2)$ as desired;
    \item Invert Hermite transform and then back to computation basis.
\end{enumerate}

Next, we briefly discuss how this synthesis of the technical tools introduced in this section provides a cohesive rationale for the efficient implementation of $SU(2)$, thereby offering a complete perspective. While \cite{iyer2026efficient} introduced key components of this architecture and the algorithm, an explicit and concise analysis of their collective integration was omitted. Addressing this gap serves a dual purpose: it confirms the structural and procedural correctness of the method, and it provides a self-contained, accessible primer for readers with a more general scientific background.

Conventionally, the creation and annihilation operators, $a^\dagger$ and $a$, are defined abstractly over the algebraic Fock space. However, to utilize this formulation in practical quantum based on digital circuits, one must choose an explicit, physically realizable representation of the Fock basis. The Hermite basis serves as an ideal candidate for this implementation, as the action of the creation and annihilation operators on Hermite functions precisely mirrors the canonical commutation relations and state transitions of the abstract harmonic oscillator. Operating within the quantum Hermite basis is highly advantageous because these states are natively represented in the position basis, $|x\rangle$. Since the states $|x\rangle$ are the exact eigenstates of the position operator ($\hat{x}|x\rangle = x|x\rangle$), the position operators remain strictly diagonal when the underlying quantum state is expressed in this spatial representation, significantly simplifying the simulation of position-dependent potentials and operators. Although momentum operators are non-diagonal in the position basis, they become strictly diagonal after conjugation with the Fourier transform, as the position and momentum operators are related by this exact unitary transformation, i.e. $\hat{p} = \hat{F}^{-1} \hat{x} \hat{F}$.

\section{Quantum Kravchuk Transform}\label{sec:qkravchuk}
In this section we construct the Quantum Kravchuk Transform (QKT) as a unitary operator on a finite-dimensional Hilbert space. Precisely, analogous to the classical Kravchuk transform reviewed in \Cref{sec:ckt}, we define the $(N+1)$-dimensional QKT as the following mapping
\begin{equation}\label{eq:qkt}
    |f\> \mapsto \sum \hat{f}(k)|k\>,
\end{equation}
where $\hat{f}(k)$ is the Kravchuk coefficient of $f$ corresponding to $k$, i.e. $\hat{f}(k) := \sum_{j=0}^{N} e^{i\frac{\pi}{2}(j-k-N/2)}\, \phi_k(j)\, x_j$, for $0 \leq k \leq N$. Note that by \Cref{thm:Kravchuk_transform}, the conjugate transpose of the above mapping gives
\[
    \ket{n} \mapsto \ket{\phi_n} := \sum_{k \in X_N} \phi_n(k)\, \ket{k},
\]
where the Kravchuk function $\phi_n$ is given in \Cref{def:kravchuk_function}, i.e. mapping computational bases to Kravchuk bases. 

By \Cref{thm:Kravchuk_transform}, the mapping \Cref{eq:qkt} can be realized by the unitary operator $K = e^{i\frac{\pi}{4}} e^{-i\frac{\pi}{4}A}$, where $A$ is the matrix defined in \Cref{eq:A-matrix}. Hence, the efficient implementation of the QKT can be reduced to the problem of efficiently simulating the unitary operator $e^{-i\frac{\pi}{4}A}$. Because a direct simulation of $e^{-i\frac{\pi}{4}A}$ is generally inefficient in the standard computational basis, we adopt an alternative strategy inspired by \cite{iyer2026efficient}. Our algorithm first maps the computational basis states into the oscillator basis. Within this oscillator representation, the Kravchuk transform can be executed as an operation generated by the elements of the $\mathfrak{su}(2)$ Lie algebra, after which the system is inverted back to the computational basis.

A central component of this approach is establishing the explicit representation of the Kravchuk transform within the $\mathfrak{su}(2)$ oscillator framework. Specifically, comparing the generator $A$ with the $\mathfrak{su}(2)$ irreducible representation $\bar{S}$ reveals that the two matrices are identical up to a constant diagonal shift. As a result, the QKT can be formally identified with the Hamiltonian evolution generated by this specific irreducible representation of $SU(2)$. In the remainder of this section, we first demonstrate how the Kravchuk transform is represented via $\mathfrak{su}(2)$ generators in the oscillator basis, and subsequently provide the complete algorithmic details of the QKT circuit.

\subsection{Kravchuk Transform in Oscillator representation}\label{subsec:su2-kravchuk}
\begin{lem}[Kravchuk Transform in Oscillator Representation]
\label{lm:kt_in_Fock}
    Let $\mathcal{K}$ denote the Kravchuk transform on $\ell^2(X_N) \cong \mathbb{C}^{N+1}$. If its global phase is ignored, the Kravchuk transform represented in Fock basis, denoted as $\hat{\mathcal{K}}$, implements the exponential of the following mapping
\[
|n_1, n_2\> \mapsto \frac{1}{2}\sqrt{(n_1+1)n_2}\,\ket{n_1+1, n_2-1} + \frac{1}{2}\sqrt{n_1(n_2+1)}\,\ket{n_1 - 1, n_2 + 1}
\]
which has $(N+1) \times (N+1)$ matrix representation 
\begin{align*}
\frac{1}{2}\begin{pmatrix}
0 & \sqrt{N} & 0 & 0 & \cdots & 0 \\
\sqrt{N} & 0 & \sqrt{(N-1)2} & 0 & \cdots & 0 \\
0 & \sqrt{(N-1)2} & 0 & \sqrt{(N-2)3} & \cdots & 0 \\
0 & 0 & \sqrt{(N-2)3} & 0 & \cdots & 0 \\
\vdots & \vdots & \vdots & \vdots & \ddots & \sqrt{N} \\
0 & 0 & 0 & 0 & \sqrt{N} & 0
\end{pmatrix}.
\end{align*}
More precisely, $\hat{\mathcal{K}} = \exp(i \frac{\pi}{2} \hat{S})= \exp(i \frac{\pi}{4} (\hat{x}_j \hat{x}_k + \hat{p}_j \hat{p}_k))$ and it has global phase $e^{-i \frac{\pi}{4}N}$.
\end{lem}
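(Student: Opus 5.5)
The plan is to start from the matrix factorization in \Cref{thm:Kravchuk_transform}, namely $\mathcal{K} = e^{i\pi/4} e^{-i(\pi/4)A}$, and to show that the tridiagonal generator $A$ of \Cref{eq:A-matrix} is an affine function of the spin-$N/2$ matrix $\bar S$ of \Cref{eq:Sbar}. The candidate identity is
\begin{align*}
A = (N+1)\,I - 2\bar S .
\end{align*}
To check it, I would compare entries under the identification $\ket{l} = \ket{N-l,\,l}$ of \Cref{def:fock}. By \Cref{eq:S_hat}, the only nonzero off-diagonal entries of $\bar S$ are $\bra{l+1}\bar S\ket{l} = \tfrac12\sqrt{(N-l)(l+1)}$. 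Setting $k=l+1$ gives $\tfrac12\sqrt{k(N-k+1)} = \beta_k/2$. Hence $-2\bar S$ has off-diagonal entries $-\beta_k$, exactly as in $A$. The diagonal of $\bar S$ is zero, which accounts for the constant diagonal $N+1$ of $A$.

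Since $(N+1)I$ commutes with $\bar S$, the exponential splits:
\begin{align*}
e^{-i(\pi/4)A} = e^{-i\pi(N+1)/4}\, e^{i(\pi/2)\bar S}.
\end{align*}
Combining this with the prefactor $e^{i\pi/4}$ gives
\begin{align*}
\mathcal{K} = e^{-i\pi N/4}\, e^{i(\pi/2)\bar S}.
\end{align*}
This identifies both the global phase $e^{-i\pi N/4}$ and the exponent $i\tfrac{\pi}{2}\hat S$ claimed in the lemma. It is also consistent with \Cref{remark:A-to-hatK}, since there $\hat K = -\bar S$ in the computational basis. Transporting through the Jordan--Schwinger identification, $\bar S$ is the matrix of $\hat S$ on $\mathcal{B}_N$, so $\hat{\mathcal K} = \exp(i\tfrac{\pi}{2}\hat S)$ as an operator on the Fock subspace. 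The displayed matrix and the displayed action on $\ket{n_1,n_2}$ are then just \Cref{eq:Sbar} and \Cref{eq:S_hat}.

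For the last equality I would write $a_j = (\hat x_j + i\hat p_j)/\sqrt2$ and expand $\hat a_1^\dagger \hat a_2 + \hat a_2^\dagger \hat a_1$. The mixed terms $\pm i(\hat x_1\hat p_2 - \hat p_1 \hat x_2)$ cancel between the two summands. What remains is $\hat x_1\hat x_2 + \hat p_1\hat p_2$, since operators on different modes commute. Therefore $\hat S = \tfrac12(\hat x_1\hat x_2+\hat p_1\hat p_2)$, consistent with \Cref{lem:xp-decomposition}, and
\begin{align*}
i\tfrac{\pi}{2}\hat S = i\tfrac{\pi}{4}(\hat x_1\hat x_2+\hat p_1\hat p_2).
\end{align*}

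The main obstacle I expect is bookkeeping of conventions rather than any conceptual step. Three things must be aligned: the descending-lex ordering $\ket l\leftrightarrow\ket{N-l,l}$ against the index $k$ of $\beta_k$ in $A$; the sign of the off-diagonals, since $A$ carries $-\beta_k$ while $\bar S$ is positive, which produces $e^{+i(\pi/2)\bar S}$ rather than $e^{-i(\pi/2)\bar S}$; and the normalisation of $\hat x,\hat p$ in the factor $\tfrac12$. I would verify the sign and ordering on the smallest case $N=1$ by explicit $2\times 2$ exponentiation before stating the general identity. I would also note that the reversed ordering $l\mapsto N-l$ leaves $\bar S$ invariant, because its off-diagonal entries are symmetric under that reflection, so the identification is robust to this choice.
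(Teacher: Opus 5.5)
Your proposal is correct and follows essentially the same route as the paper's proof. Both establish $A=(N+1)I-2\bar S$, split off the scalar to get $\mathcal{K}=e^{-i\pi N/4}e^{i\frac{\pi}{2}\bar S}$, identify $\bar S$ with $\hat S$ on $\mathcal{B}_N$, and rewrite $\hat S=\tfrac12(\hat x_1\hat x_2+\hat p_1\hat p_2)$. Your entrywise check of the off-diagonals and your explicit expansion of $a_1^\dagger a_2+a_2^\dagger a_1$ fill in steps the paper only asserts.
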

\begin{proof}
We rewrite the $(N+1)$ dimensional matrix $A$ from \Cref{eq:A-matrix} as
    \begin{align}\label{eq:S-evolution}
        A = (N+1) I - 2\Bar{S}, \quad \text{and hence,}\quad \mathcal{K} = e^{i\frac{\pi}{4}} e^{-i\frac{\pi}{4}(A)} = e^{-i\frac{\pi}{4}N} e^{i\frac{\pi}{2}(\Bar{S})}.
    \end{align}
    From \Cref{eq:S_hat,eq:Sbar}, $\Bar{S}$ is the $(N+1)$ dimensional matrix representation of $\hat{S}$ in computational basis. More precisely, 
    \begin{align*}
    \Bar{S} = \frac{1}{2}
    \begin{pmatrix}
        0            & \beta_1        &                  &        & \\
        \beta_1      & 0              & \beta_2    &        & \\
                     & \beta_2 & 0                 & \ddots & \\
                     &                & \ddots            & \ddots & \beta_{N} \\
                     &                &                   & \beta_{N} & 0
    \end{pmatrix}
    = \frac{1}{2}((N+1)I - A)
    \end{align*}
    where $\beta_k = \sqrt{k(N-k+1)}$. This relation gives a direct connection between the Kravchuk Transform evolution operator and the $SU(2)$ irreps. Recall that the operator $\hat{S}$ is defined on the fock basis $\ket{n_1, n_2}$, while the transform $\mathcal{K}$ and $\bar{S}$ are defined on the computational basis $\ket{l}$. The bridge between the two bases is discussed in \Cref{sec: su2-simulation}.

    Finally, by leveraging \Cref{eq:JS-generators} alongside the canonical definitions $\hat{x}_j =\frac{1}{\sqrt{2}} (a_j^\dagger + a_j)$ and $\hat{p}_j = \frac{i}{\sqrt{2}}(a_j^\dagger - a_j)$ \cite{iyer2026efficient}, $\hat{S}$ simplifies to $\frac{1}{2} (\hat{x}_j \hat{x}_k + \hat{p}_j \hat{p}_k)$. The operator $\hat{\mathcal{K}}$ follows an identical formulation.
\end{proof}

\begin{comment}
    Let $\mathcal{H}_N \cong \mathbb{C}^{N+1}$ carry the $(N+1)$-dimensional irreducible representation of $SU(2)$ generated by
\eqref{eq:JS-generators}. Recall from \Cref{eq:kravchuk-transform} the Quantum Kravchuk Transform is the
unitary operator
\begin{align}
    \mathcal{K} := \exp\left(-i\frac{\pi}{4} N\right)\exp\left(-i\frac{\pi}{2}\,\hat{K}\right) = \exp\left(-i\frac{\pi}{4} N\right)\exp\left(-i\frac{\pi}{2}\,\bar{S}\right),
    \label{eq:qkt-definition}
\end{align}
acting on $\mathcal{H}_N$ where $e^{-i(\pi/2)\hat{K}} \in SU(2)$. This operator implements the map on the $\mathcal{H}_N$ space of qubits as
\begin{align}
    \mathcal{K} \ket{n} = \ket{\phi_n} = \sum_k \phi_n(k) \ket{k}
\end{align}
where $\phi_n$ are the Kravchuk functions from \Cref{def:kravchuk_function}.
\end{comment}

%\begin{lem}[Kravchuk Transform as SU(2) irrep]\label{claim:qkt}  Consequently, the Hamiltonian simulation of $e^{i\frac{\pi}{4}(-A)}$ can be cast in terms of $\Bar{S}$. So, the Kravchuk Transform unitary $\mathcal{K}$ decomposes as
%    \begin{align}\label{eq:S-evolution}
%        \mathcal{K} = e^{i\frac{\pi}{4}} e^{-i\frac{\pi}{4}(A)} = e^{-i\frac{\pi}{4}N} e^{i\frac{\pi}{2}(\Bar{S})}.
%    \end{align}
%\end{lem} 

\begin{remark}
    It is worth pointing out that, much like the continuous quantum harmonic oscillator, a discrete Kravchuk oscillator can be constructed such that its physical eigenstates are given by Kravchuk functions \cite{atakishiyevaKravchukOscillatorRevisited2014}. Furthermore, these states constitute a natural basis for the irreducible representations of the rotation algebra $\mathfrak{so}(3)$ \cite{atakishiyevaKravchukOscillatorRevisited2014}. A brief review detailing this relationship is presented in \Cref{sec:KO}.
\end{remark}

\subsection{Algorithm for Quantum Kravchuk Transform}

\paragraph{$V_1$ Isometry Implementation.}\label{para:isometry}
The explicit form of the computational to fock basis isometry $V_1$ is given here, accompanied by its circuit construction. The isometry $V_1: \ket{l} \mapsto \ket{n_1, n_2}$ where $n_1+n_2 = N$, requires $\log{N}$ qubits for $\ket{l}$ and ancilliary qubits for $\ket{n_1, n_2}$. The case for $n=2$ is simple because the value on $\ket{l}$ register is exactly what we want in $\ket{n_2}$, e.g. $\ket{0} \mapsto \ket{N, 0}, \ket{j} \mapsto \ket{N-j, j}$. The subtraction on the $\ket{n_1}$ register would require a circuit implementation of
\begin{align*}
    \ket{0, l} \mapsto \ket{N-l, l} := \ket{n_1, n_2}
\end{align*}
which can be done by running the Draper QFT adder circuit in reverse. We recall the
single fact about it that we need whose proof is deferred to \Cref{sec:draper}.
 
\begin{lem}[Draper QFT adder]\label{lem:adder}
Fix $b\in\Z$. There is a unitary $\Phi(b)$, diagonal in the Fourier basis and
built entirely from single-qubit phase rotations, such that
\[
\Phi(b)\,\QFT\ket{a}
   = \frac{1}{\sqrt{N}}\sum_{k=0}^{N-1}
       e^{\,2\pi i\,(a+b)k/N}\,\ket{k}
   = \QFT\ket{\,(a+b)\bmod N\,}.
\]
Equivalently, the operator
$A(b)\coloneqq \IQFT\,\Phi(b)\,\QFT$ implements the in-place modular addition
\[
A(b):\ \ket{a}\longmapsto \ket{\,(a+b)\bmod N\,}.
\]
\end{lem}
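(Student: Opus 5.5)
The plan is to prove the Draper adder lemma directly from the definition of the quantum Fourier transform on $\Z_N$. I take $N=2^q$, as is implicit in the qubit setting. I write
\[
\QFT\ket{a}=\frac{1}{\sqrt N}\sum_{k=0}^{N-1}e^{2\pi i ak/N}\ket{k}
\]
and define $\Phi(b)$ as the diagonal operator
\[
\Phi(b)\ket{k}=e^{2\pi i bk/N}\ket{k}.
\]

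Applying $\Phi(b)$ to $\QFT\ket{a}$ term by term multiplies each amplitude by $e^{2\pi i bk/N}$, which yields $\frac{1}{\sqrt N}\sum_k e^{2\pi i (a+b)k/N}\ket{k}$. Since $k$ is an integer, $e^{2\pi i(a+b)k/N}$ depends only on $(a+b)\bmod N$. So this state equals $\QFT\ket{(a+b)\bmod N}$, which proves the displayed identity. Multiplying on the left by $\IQFT$ and using the unitarity of $\QFT$ gives $A(b)\ket{a}=\ket{(a+b)\bmod N}$. By linearity, $A(b)$ therefore implements in-place modular addition on all inputs.

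Next I show that $\Phi(b)$ is built only from single-qubit phase rotations. Write $k=\sum_{r=0}^{q-1}k_r2^r$ in binary. Then
\[
e^{2\pi i bk/N}=\prod_{r=0}^{q-1}e^{2\pi i b\,2^r k_r/N},
\]
so $\Phi(b)=\bigotimes_{r}R_r(b)$ with $R_r(b)=\mathrm{diag}\bigl(1,e^{2\pi i b2^r/2^q}\bigr)$ acting on qubit $r$. This is a product of $q$ single-qubit phase gates whose angles are fixed by the classical integer $b$. Because the product factorizes, the operator is diagonal in the Fourier (computational-after-$\QFT$) basis, as claimed. For the isometry $V_1$, the paper needs $\ket{0}\mapsto\ket{N-l}$. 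I would note that this follows by first copying $l$, then applying $A(-l)$ controlled on the $\ket{l}$ register, and then $A(N)$. This lives outside the lemma itself, though.

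The main obstacle is the bookkeeping of conventions rather than any deep step. First, the bit ordering of $\QFT$ (whether an output reversal is included) must match the qubit indexing used in $\Phi(b)$; I would fix the convention with a reversal-free $\QFT$ and index consistently. Second, $N$ here denotes the register dimension $2^q$ and not the Kravchuk $N$. The reduction modulo $N$ must be taken relative to that register size, so I would state explicitly that the register holds $q\ge\lceil\log_2(N+1)\rceil$ qubits.
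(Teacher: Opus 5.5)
Your proposal is correct and follows essentially the same route as the paper: both define $\Phi(b)$ as the diagonal phase $\ket{k}\mapsto e^{2\pi i bk/N}\ket{k}$, factor it over the binary digits of $k$ into single-qubit phase gates $\mathrm{diag}(1,e^{2\pi i b2^{j}/N})$, and use periodicity of the exponential to identify the result with $\QFT\ket{(a+b)\bmod N}$ before applying $\IQFT$. Your remarks on the $\QFT$ bit-ordering convention, and on taking the modulus to be the register size $2^q$ rather than the Kravchuk $N$, are sensible clarifications that the paper leaves implicit.
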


\begin{remark}[Subtraction]\label{rem:sub}
Subtraction is addition of a negative constant. Since
$\Phi(b)^{\dagger}=\Phi(-b)$, running the phase stage with the rotation angles
negated gives
\[
A(-b):\ \ket{a}\longmapsto \ket{\,(a-b)\bmod N\,}.
\]
This is the QFT-based subtractor: it is the adjoint of the adder, obtained by
reversing the sign of every phase rotation (equivalently, running the adder
circuit in reverse).
\end{remark}
  
\begin{lem}[Implementation of $V_1$]\label{lm:v1}
Let $N=2^{r}$. Let the $\ket{l}$  register and the $\ket{N}$ register hold $r+1$ qubits, with $M\coloneqq 2^{r+1}=2N$. Then there exists a unitary circuit $U$ such that for every $l\in\{0, \dots, N\}$
\begin{align}
U \ket{N}\ket{l} = \ket{N-l}\ket{l},
\end{align} 
where the target value is non-modular $N-l\in\{0,\dots,N\}$. Moreover, U is composed of single $QFT$ and $QFT^\dagger$ sub-circuits along with a layer of $r(r+1)$ two-qubit controlled phase rotations, which leads to a total of $O(r^{2})$ elementary gates and no ancillary qubits.
\end{lem}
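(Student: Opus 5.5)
The plan is to realize $U$ as a single QFT-based subtractor acting on the first register, controlled by the second. The first register starts in $\ket{N}$, and we want to subtract the value $l$ held in the second register. Work with $r+1$ qubits per register and modulus $M=2N=2^{r+1}$. The target $N-l$ lies in $\{0,\dots,N\}\subset\{0,\dots,M-1\}$, so
\[
(N-l)\bmod M \;=\; N-l .
\]
A modular subtraction mod $M$ therefore produces the non-modular value. This is the reason for the extra qubit: with only $r$ qubits, the case $l=0$ would require storing $N=2^r$, which overflows.

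The circuit is $U=(\IQFT_M\otimes I)\,\Phi_c\,(\QFT_M\otimes I)$, where $\Phi_c$ is a controlled version of the phase stage $\Phi(-l)$ from \Cref{lem:adder} and \Cref{rem:sub}.

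The concrete steps are as follows.

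\begin{enumerate}
\item Apply $\QFT_M$ to the first register, giving
\[
\frac{1}{\sqrt M}\sum_{k} e^{2\pi i Nk/M}\ket{k}\ket{l}.
\]
\item Apply the phase $e^{-2\pi i lk/M}$. Write $k=\sum_{s=0}^{r}k_s2^s$ and $l=\sum_{t=0}^{r}l_t2^t$. Then
\[
e^{-2\pi i lk/M}=\prod_{s,t} e^{-2\pi i\,k_s l_t\,2^{s+t}/2^{r+1}} .
\]
Factors with $s+t\ge r+1$ are trivial. Each remaining factor is a two-qubit controlled phase rotation between qubit $k_s$ of the first register and qubit $l_t$ of the second. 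Since every such gate is diagonal in the computational basis, the second register is left untouched.
\item The resulting state is $\QFT_M\ket{(N-l)\bmod M}\ket{l}$. Applying $\IQFT_M$ then gives $\ket{N-l}\ket{l}$, using the identity in \Cref{lem:adder} with $b=-l$ and $M$ in place of $N$.
\end{enumerate}

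For the gate count, there are at most $(r+1)^2$ pairs $(s,t)$, and only those with $s+t\le r$ are nontrivial. Counting more carefully, or simply bounding by $r(r+1)$ up to the trivial cases, gives the stated layer of $r(r+1)$ controlled rotations. The two QFTs on $r+1$ qubits cost $O(r^2)$ each, so the total is $O(r^2)$ gates. No ancillas are used beyond the two registers.

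The main point needing care is the second step. The constant-$b$ adder of \Cref{lem:adder} has to be promoted to a register-controlled adder, and the phase must factor exactly into pairwise controlled rotations. This works because the map $(k,l)\mapsto e^{-2\pi i kl/M}$ is bilinear in the bits of $k$ and $l$. The other place to check is the non-modularity claim, which relies on $0\le N-l<M$. This holds precisely because $l\le N$ and $M=2N$.
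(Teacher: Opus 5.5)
Your proof is correct and matches the paper's construction: apply $\QFT$ mod $M$ to the target register, then a register-controlled Draper phase layer realizing $e^{-2\pi i lk/M}$ as pairwise controlled rotations, then $\IQFT$, with non-modularity coming from $0\le N-l<M$. Your version is in fact more careful than the paper's on two points. Letting the control bits run over $t=0,\dots,r$ includes the top bit $l_r$, which is needed for $l=N$ and which the paper's layer (control qubits $i=0,\dots,r-1$ only) omits. And the number of nontrivial gates is exactly $(r+1)(r+2)/2$, which is at most $r(r+1)$ only once $r\ge 2$, so state your count that way rather than ``up to the trivial cases.''
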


\begin{proof}
Let the control register hold $\ket{l}$ and initialize the target register to
$\ket{N}$. Since $0\le l\le N$ and $0\le N<M$, both the input value $N$ and
the output value $N-l\in\{0,\dots,N\}$ lie in $\{0,1,\dots,M-1\}$, so no wraparound occurs in the modulus $M=2N$ and the difference is computed as an ordinary integer. The strategy is to subtract the value $l$ stored in the control register from the target, using a \emph{controlled} version of the Draper subtractor of Remark~\ref{rem:sub}. Apply $\QFT$ (over modulus $M$) to the target register only:
\begin{align}
\ket{N}\ \ket{l} \longmapsto\ \QFT\ket{N} \ket{l} 
   = \frac{1}{\sqrt M}\sum_{k=0}^{M-1}
       e^{\,2\pi i\,N k/M}\,\ket{k} \ket{l} .
\end{align}
We now subtract the control value $l$ from the target \emph{in the Fourier
basis}. Concretely, for each control qubit $i$ of the second register
(bit value $l_i$, weight $2^{i}$, $i=\{0,\dots,r-1\}$) and each target qubit $j$
(weight $2^{j}$, $j=\{0,\dots,r\}$), apply a two-qubit controlled phase gate that
imparts the phase
\begin{align}
\exp \Bigl(-2\pi i \, l_i 2^{i} 2^{j-(r+1)}\Bigr)
\end{align}
to the target qubit $j$ conditioned on control qubit $i$ being $\ket{1}$. This
layer comprises $r(r+1)$ controlled phase rotations. Summing over the control
bits $l=\sum_i l_i 2^{i}$, the total phase applied to target qubit $j$ when its
bit is $k_j$ is
\begin{align}
\prod_{i=0}^{r-1}\exp \bigl(-2\pi i \, l_i 2^{i} k_j 2^{j-(r+1)}\bigr)
   = \exp \bigl(-2\pi i \, l k_j  2^{j-(r+1)}\bigr).
\end{align}
By the factorization used in Lemma~\ref{lem:adder}, the joint effect on the
target superposition is to multiply the coefficient of $\ket{k}$ by
$e^{-2\pi i\, l k / M}$. This is exactly the operator $\Phi(-l)$ of
Lemma~\ref{lem:adder} and Remark~\ref{rem:sub} (now over modulus $M$), but with
the constant $l$ supplied coherently by the control register rather than fixed
in advance. The state becomes
\[
 \frac{1}{\sqrt M}\sum_{k=0}^{M-1}
   e^{2\pi i N k/M} e^{-2\pi i l k/M} \ket{k} \ket{l}
 =  \frac{1}{\sqrt M}\sum_{k=0}^{M-1}
   e^{\,2\pi i\,(N-l)k/M} \ket{k} \ket{l}
 = \QFT\ket{(N-l)\bmod M} \ket{l}.
\]
because the initial target value was $N$, so the accumulated Fourier phase encodes $N-l$. Apply $\IQFT$ (over modulus $M$) to the target register:
\[
 \QFT\ket{(N-l)\bmod M} \ket{l}
   \ \longmapsto\
\ket{(N-l)\bmod M} \ket{l}
   =  \ket{N-l} \ket{l},
\]
the last equality because $N-l\in\{0,\dots,N\}\subset\{0,\dots,M-1\}$, so the
reduction modulo $M$ is trivial. The control register is untouched throughout.
The circuit
$U = (\,I\otimes \IQFT\,)\,C\,(\,I\otimes\QFT\,)$,
where $C$ is the layer of $r(r+1)$ controlled phase rotations above, is a
product of unitaries and hence unitary. The two transforms use $O(r^{2})$
rotations each (or $O(r\log r)$ in approximate form) and $C$ uses
$O(r^{2})$ gates, giving the stated $O(r^{2})$ total.
\end{proof}

\Cref{lm:kt_in_Fock} derives that an efficient simulation of $e^{i\frac{\pi}{2}(\Bar{S})}$ directly leads to an efficient implementation of QKT. The problem now simplifies to the efficient simulation of $\mathfrak{su}(2)$ algebra element $\Bar{S}$ to $e^{i\frac{\pi}{2}\Bar{S}}$. Unfortunately, simulating $\Bar{S}$ in computational basis is not feasible because $\norm{\Bar{S}} = \Theta(N)$. The norm being exponential in $n$ implies standard Hamiltonian simulation techniques will not allow an efficient algorithm due to ``no fast-forwarding" theorems. Instead, we work with Jordan-Schwinger representation which allows fast-forwarding the simulation of $e^{i\frac{\pi}{2}\hat{S}}$ in oscillator basis with $\hat{S} = \frac{1}{2} (\hat{x}_j \hat{x}_k + \hat{p}_j \hat{p}_k)$.

\begin{lem}[Simulation of $\hat{S}$]\label{cor:Jx_decomp}
The simulation of $\frac{\pi}{2}\hat{S}$ can be decomposed as
\begin{align}\label{eq:S}
e^{it \hat{S}} &= e^{it_1 \hat{p}_1 \hat{p}_2} e^{it_2 \hat{x}_1 \hat{x}_2} e^{it_1 \hat{p}_1 \hat{p}_2}
\end{align}
where $t = \pi/2, t_1 = \tan(\pi/8)= \sqrt{2}-1$ and $t_2 = \sin{\pi/4} = 1/\sqrt{2}$. Moreover, consider the following action of $e^{it \hat{S}}$ on an arbitrary Fock state
\[
e^{it \hat{S}}\ket{\psi^c_{m_1}, \psi^c_{m_2}} = \sum_{m'_1, m'_2} \beta_{m'_1, m'_2}^{m_1, m_2} \ket{\psi_{m'_1}^c, \psi_{m'_2}^c}
\]
with $\sum_i m_i = \sum_i m_i' = N$. Discretizing the operators $\hat{x}_i$ to $\bar{x}_i$ and $\hat{p}_i$ to $\bar{p}_i$ approximates the evolution, i.e., 
\begin{align}\label{eq:discretization_error}
    \norm{\sum_{m'_1, m'_2 =0}^{\infty} \beta_{m'_1, m'_2}^{m_1, m_2} \ket{\psi_{m'_1}, \psi_{m'_2}}  - (e^{i(t_1/3) \bar{p}_1 \bar{p}_2} )^3 \; e^{it_2 \bar{x}_1 \bar{x}_2}\; (e^{i(t_1/3) \bar{p}_1 \bar{p}_2})^3\ket{\psi_{m_1}, \psi_{m_2}}} \leq e^{-\gamma L}, 
\end{align}
for all $m_1, m_2 \leq c L$ with some positive constants $c <1$ and $\gamma >0$.
\end{lem}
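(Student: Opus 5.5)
The plan has two parts: the exact factorization in \Cref{eq:S}, and the discretization bound in \Cref{eq:discretization_error}.

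\textbf{The factorization.} This is the instance $\vartheta = t = \pi/2$ of \Cref{lem:xp-decomposition}. The hypothesis $\vartheta\in[-\pi/2,\pi/2]$ holds, and the lemma gives $\vartheta_1=\tan(\pi/8)=\sqrt2-1$ and $\vartheta_2=\sin(\pi/4)=1/\sqrt2$. We use the representation $\hat S=\frac12(\hat x_1\hat x_2+\hat p_1\hat p_2)$ established in the proof of \Cref{lm:kt_in_Fock}.

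If one wants a self-contained check, it suffices to compute in the defining $2\times 2$ picture. On the Heisenberg operators $(\hat x_1,\hat x_2,\hat p_1,\hat p_2)$, each of $e^{it\hat S}$, $e^{is\hat x_1\hat x_2}$ and $e^{is\hat p_1\hat p_2}$ acts as a linear symplectic map. The first is a rotation mixing the two modes; the other two are shears of the form $\hat p_j\mapsto \hat p_j+s\hat x_k$ and $\hat x_j\mapsto \hat x_j-s\hat p_k$. The identity shear$\cdot$shear$\cdot$shear $=$ rotation is the standard $\tan(\theta/2)$, $\sin\theta$ factorization, here with $\theta=t/2$. Because all operators involved are quadratic, equality of the induced symplectic maps fixes the unitaries up to a phase. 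That phase is pinned to $1$ by checking the action on the vacuum $\ket{0,0}$, which is annihilated by $\hat S$ and is mapped to itself by the product.

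\textbf{The discretization bound.} Here the exponents are $t_1\approx 0.414$ and $t_2\approx 0.707$. Case 2 of \Cref{lm:discretization} only covers momentum phases $|\vartheta|\le 1/(2e)\approx 0.184$, so $t_1$ is too large. I would therefore split $e^{it_1\hat p_1\hat p_2}=(e^{i(t_1/3)\hat p_1\hat p_2})^3$; since $t_1/3\approx0.138<1/(2e)$, each factor falls within the allowed range. The position factor $e^{it_2\bar x_1\bar x_2}$ needs no restriction, because case 1 of \Cref{lm:discretization} is an exact identity.

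We then run a hybrid (telescoping) argument over the seven factors, applied from right to left. At each step the continuous operator acting on the current continuous state is compared with its discretized version acting on the corresponding discrete state. Writing the current state as $\sum\alpha\ket{\psi^c,\psi^c}$, unitarity of the later factors lets the errors add. Each $p$-step contributes at most $e^{-\bar\gamma L}$ per basis component, and the $x$-step contributes zero. The total is then at most $6e^{-\bar\gamma L}\cdot(\text{overhead})$, which is absorbed into $e^{-\gamma L}$ with a slightly smaller $\gamma$.

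\textbf{Main obstacle: the low-energy hypothesis.} \Cref{lm:discretization} applies only to components with $m_j\le c'L$, but intermediate states are superpositions over all $(m_1',m_2')$. Two facts resolve this. First, $\hat S$ and the full product conserve total excitation number $m_1+m_2=N$, but the individual shear factors do not. Second, each intermediate state $e^{is\hat p_1\hat p_2}\ket{\psi^c_{m_1},\psi^c_{m_2}}$ has Fock amplitudes that decay super-exponentially beyond $m'\sim C(m_1+m_2)$, since a bounded shear is a bounded squeeze with Gaussian-like tails in the number basis.

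So I would choose $c<c'$ small enough that, starting from $m_1,m_2\le cL$, all intermediate states stay, up to tail mass $e^{-\Omega(L)}$, inside $m'_j\le c'L$. I would apply the lemma on that truncated part and bound the tails directly. Controlling these tails uniformly, and summing the per-component errors over at most $O(L^2)$ components (a polynomial factor absorbed into the exponential), is the step I expect to require the most care.
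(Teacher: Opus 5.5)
Your proposal is correct and follows the same route as the paper. The paper's proof just instantiates \Cref{lem:xp-decomposition} at $\vartheta=\pi/2$ and then invokes \Cref{lm:discretization}, splitting $t_1$ into thirds so that $t_1/3<1/(2e)$; your telescoping argument and explicit treatment of the low-energy hypothesis for intermediate superpositions fill in details the paper leaves implicit.

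One small correction: the Fock-number tails produced by a bounded shear $e^{is\hat p_1\hat p_2}$ decay geometrically, like $(\tanh r)^{m'}$ for the associated squeezing parameter $r$, not super-exponentially. Geometric decay still gives tail mass exponentially small in $L$ beyond $c'L$ once $c/c'$ is small enough, so your argument goes through.
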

\begin{proof}
    \Cref{eq:S} follows after applying \Cref{lem:xp-decomposition} to \Cref{lm:kt_in_Fock}. Next, the error bound in \Cref{eq:discretization_error} is a result from using \Cref{lm:discretization}, and the fact that the term $(e^{i(t_1/3) \bar{p}_1 \bar{p}_2})^3$ is being performed instead of direct $e^{it_1 \bar{p}_1 \bar{p}_2}$ corresponds to the restriction of case 2 of \Cref{lm:discretization}, which requires the phase $t_1$ to be within interval $[-1/2e, 1/2e]$.
\end{proof}

\begin{algorithm}
\caption{Quantum Kravchuk Transform}\label{alg:qkt}
\begin{algorithmic}[1]
%\State Compute the  and consider the space $(\mathbb{C}^L)^{\otimes n}$.
\State Set $L = \mathcal{O}(N^{2.25}/\epsilon^{3.25})$
\State Implement the lexicographic ordering unitary $V_1$ as described in \Cref{lm:v1}
\State Apply $\text{QHT}^{\otimes 2}$ as described in \Cref{thm:qht} with local dimension $L$.
\State Factorize $e^{i\frac{\pi}{2} \hat{S}} = e^{it_1 \hat{p}_1 \hat{p}_2} e^{it_2 \hat{x}_1 \hat{x}_2} e^{it_1 \hat{p}_1 \hat{p}_2}$ with $t_1 = \tan(\frac{\pi}{8})$ and $t_2 = \sin(\frac{\pi}{4})$ according to \Cref{lm:kt_in_Fock} and \Cref{cor:Jx_decomp}.
\State Perform $3$ repetitions of $\exp(i (t_1/3) \bar{p}_1 \bar{p}_2)$
\State Perform $\exp(i t_2 \bar{x}_1 \bar{x}_2)$
\State Perform $3$ repetitions of $\exp(i (t_1/3) \bar{p}_1 \bar{p}_2)$
\State Implement the inverse of the two QHTs and the inverse of $V_1$.
\end{algorithmic}
\end{algorithm}

\begin{thm}[Quantum Kravchuk transform]\label{thm:qkt}
    Let $M \geq 0$ be an integer and $\epsilon >0 $ be an error parameter. Then \Cref{alg:qkt} implements the Quantum Kravchuk transform (\Cref{eq:qkt}) on $\ell^2(X_N) \cong \mathbb{C}^{N+1}$ within additive error $\epsilon$ and has complexity $O((\log N + \log(1/\epsilon))^3  \times \log(1/\epsilon))$.
\end{thm}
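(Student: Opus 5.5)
The proof splits into two parts: correctness of \Cref{alg:qkt}, with an error budget, and a gate count. For correctness, \Cref{lm:kt_in_Fock} gives $\mathcal{K} = e^{-i\frac{\pi}{4}N} e^{i\frac{\pi}{2}\bar S}$, and $\bar S$ is the matrix of $\hat S$ under the identification $\ket{l}\leftrightarrow\ket{N-l,l}$. So, up to the known global phase, it suffices to show that the circuit $V_1^{\dagger}(\mathrm{QHT}^{\dagger})^{\otimes 2}\, W\, \mathrm{QHT}^{\otimes 2} V_1$ is $\epsilon$-close to $e^{i\frac{\pi}{2}\bar S}$ on $\ell^2(X_N)$. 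Here
\[
W = (e^{i(t_1/3)\bar p_1\bar p_2})^3\, e^{it_2\bar x_1\bar x_2}\,(e^{i(t_1/3)\bar p_1\bar p_2})^3 .
\]
The steps are as follows.

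\begin{enumerate}
\item \Cref{lm:v1} gives an exact map from $\ket{l}$ to $\ket{N-l}\ket{l}=\ket{n_1,n_2}$ with $n_1,n_2\le N$.
\item \Cref{thm:qht}, applied on each mode with $M=N+1$, sends $\ket{n_1,n_2}$ to $\ket{\psi_{n_1},\psi_{n_2}}$ with error $\epsilon/5$ per application. This requires $L=O(N^{2.25}/\epsilon^{3.25})$, which ensures $N\le cL$.
\item \Cref{cor:Jx_decomp} then gives
\[
W\ket{\psi_{n_1},\psi_{n_2}} \approx \sum_{m_1'+m_2'=N}\beta^{n_1,n_2}_{m_1',m_2'}\ket{\psi_{m_1'},\psi_{m_2'}}
\]
with error at most $e^{-\gamma L}\le\epsilon/5$. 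The sum truncates to $m_1'+m_2'=N$ because $\hat S$ commutes with $\hat N$. By \Cref{lm:kt_in_Fock} the coefficients $\beta$ are exactly the matrix entries of $e^{i\frac{\pi}{2}\bar S}$ in the $\ket{l}$ basis.
\item The two inverse QHTs, followed by $V_1^{\dagger}$, return the state to $\ket{l}$ form with the ancilla register restored to $\ket{N}$.
\end{enumerate}

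The error bound uses linearity and unitarity. Each stage's error is measured as an operator-norm error on the relevant $(N+1)$-dimensional subspace, and errors add under composition of unitaries (telescoping hybrid argument). The total is therefore at most $\epsilon$.

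The main obstacle is that \Cref{cor:Jx_decomp} bounds the error only per basis state $\ket{\psi_{m_1},\psi_{m_2}}$, while we need a bound on arbitrary superpositions. I would handle this by the triangle inequality over the $N+1$ basis states, which costs a factor of at most $\sqrt{N+1}$ to $N+1$. Since the bound is $e^{-\gamma L}$ with $L$ polynomial in $N/\epsilon$, that factor is absorbed, and we still have $(N+1)e^{-\gamma L}\le \epsilon/5$. The same issue arises for the QHT: its guarantee in \Cref{thm:qht} is stated for arbitrary normalized superpositions, so it transfers directly, but the inverse QHT must act on the approximated state. I would bound that step via unitarity of the exact inverse, plus the fact that the output of $W$ lies, up to the same error, in the span where the QHT guarantee holds.

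For the gate count, $V_1$ costs $O(\log^2 N)$. The two QHTs and their inverses cost
\[
O\bigl((\log N+\log(1/\epsilon))^3\log(1/\epsilon)\bigr),
\]
since $\log M=O(\log N)$. Each $e^{i\theta\bar x_1\bar x_2}$ is a diagonal phase $e^{i\theta x_{j_1}x_{j_2}}$ on two $\log L$-qubit registers. It can be implemented by a reversible multiplier and phase kickback, at cost $\mathrm{poly}(\log L)=O((\log N+\log(1/\epsilon))^2)$. Each $e^{i\theta\bar p_1\bar p_2}$ is the same gate conjugated by centered QFTs on both registers, which adds $O(\log^2 L)$. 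There are seven such exponentials, a constant number, so the QHT cost dominates and gives the stated $O((\log N+\log(1/\epsilon))^3\log(1/\epsilon))$.
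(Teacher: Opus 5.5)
Your proposal is correct and follows the same route as the paper. Correctness comes from chaining \Cref{lm:v1}, the QHT of \Cref{thm:qht}, the factorization and discretization bound of \Cref{cor:Jx_decomp}, and the identity $\mathcal{K}=e^{-i\pi N/4}e^{i\frac{\pi}{2}\bar S}$ from \Cref{lm:kt_in_Fock}, and the gate count is dominated by the QHTs and their inverses; your explicit error budget (telescoping over stages, triangle inequality from basis states to superpositions, inverse QHT bounded through unitarity) fills in details the paper's proof leaves implicit without changing the argument.
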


\begin{proof} 
\Cref{alg:qkt} implements $V_1^\dagger (\mathrm{QHT}^{\otimes 2})^\dagger \exp(i t_1 \bar{p}_1 \bar{p}_2) \cdot \exp(i t_2 \bar{x}_1 \bar{x}_2) \cdot \exp(i t_1 \bar{p}_1 \bar{p}_2) \cdot \mathrm{QHT}^{\otimes 2} V_1$.
Its algorithm's correctness is guaranteed by \Cref{lm:v1}, \Cref{lm:kt_in_Fock}, \Cref{cor:Jx_decomp}, and the reasoning in \Cref{sec: su2-simulation}.

Next, we analyze the complexity of our algorithm. The $V_1$ construction in step 2 has complexity $O(\log^2N)$ from \Cref{lm:v1}, and $\text{QHT}^{\otimes 2}$ in step 3 has complexity $O((\log N + \log(1/\epsilon))^3 \times \log(1/\epsilon))$ which is polylogarithmic in $1/\epsilon$ and $N$.

Note that from step 5 through step 7, each operation is diagonal unitary. In particular, the unitary $\exp(i t_2 \bar{x}_1 \bar{x}_2)$ in step 6 with $t_2 = \frac{1}{\sqrt{2}}$ given with $O(\log(N/\epsilon))$ bits of precision can be implemented with cost $O(\log^2(N/\epsilon))$ using standard simulation techniques because it is diagonal in the current basis. As for performing momentum operator $\hat{p}$, the centered Fourier transform will be applied before and after. More specifically, in step 5 or 7, we first apply a centered discrete Fourier transform of dimension $L$, simulate the unitary evolution $\exp(i (t_1/3) \bar{p}_1 \bar{p}_2)$ with $t_1 = \sqrt{2}-1$, and subsequently invert the centered Fourier transform. Hence, the cost of step 5 comes from 2 applications of  centered Fourier transforms, $O(\log^2L)= O(\log^2(N/\epsilon))$, and the simulation of diagonal operator with $O(\log(N/\epsilon))$ bits of precision on $t_1$, which in total costs $O(\log^2L + \log^2(N/\epsilon)) = O(\log^2(N/\epsilon)).$ So does step 7.

Overall, the complexity for \Cref{alg:qkt} is 
\[
O((\log N + \log(1/\epsilon))^3 \times \log(1/\epsilon) + \log^2(N/\epsilon)) = O((\log N + \log(1/\epsilon))^3  \times \log(1/\epsilon)).
\]

\end{proof}

\section{Conclusion and Open Questions}\label{sec:conclusion}
In this work, we characterize the action of Kravchuk transform within the harmonic oscillator representation. Based on this explicit connection between the Kravchuk transform and the $\mathfrak{su}(2)$ Lie algebra, we develop an efficient quantum circuit that implements the quantum Kravchuk transform (QKT) by leveraging the $SU(n)$ simulation framework introduced in \cite{iyer2026efficient}. Specifically, to perform the QKT in computational basis, the proposed method utilizes the quantum Hermite transform (QHT) alongside the Jordan-Schwinger representation of the Kravchuk transform to realize the transform via a system of two quantum harmonic oscillators.

We conclude with some open directions for future investigation: 
\begin{itemize}
\item Can the gate complexity of the underlying QHT subroutine be reduced to further optimize the overall complexity of our proposed method, given that the QHT serves as the complexity-dominating component within the framework?
\item Can we develop a more direct method for implementing the QKT by leveraging the definition and the structural properties of Kravchuk tranform itself, thereby bypassing complex primitives like the QHT within the framework to achieve better computational complexity? 
\item Can we identify concrete, practical applications where the QKT yields a demonstrable quantum advantage over classical counterparts? For example, an alternative primitive for constructing DQI states \cite{marwaha2026complexitydecodedquantuminterferometry}.
\item Can the quantum circuit implementations of the Hermite and Kravchuk transforms be extended to other orthogonal polynomials under the Askey scheme, such as the Meixner and Hahn transforms? Specifically, can the  $SU(1,1)$  representation of Meixner polynomials be exploited to design algorithms for Meixner transform?
\item Is there a relation between our Kravchuk Transform results and the $DQI$ over $\mathbb{F}_p$ which also follows an underlying $\mathfrak{su}(2)$ algebra (Section 6.4, \cite{marwaha2026complexitydecodedquantuminterferometry})?
%\item The implementations of Hermite and Kravchuk Transform in a quantum circuit opens up avenues to explore other orthogonal polynomials under the Askey scheme, like Hahn and Meixner Transforms. Interestingly, Meixner polynomials follow a similar $SU(1,1)$  representation and discretize the Laguerre type functions. This also relates to the DQI extensions proposed in (Section 6.4, \cite{marwaha2026complexitydecodedquantuminterferometry}).
\end{itemize}

\newpage

\bibliographystyle{alphaurl}
\bibliography{refs}

\appendix

\section{Kravchuk limits to Hermite}\label{sec:k_to_h}
The following sketch shows that Kravchuk oscillator is a valid discretization of the continuous Harmonic oscillator, upto a scaling. Consider the Kravchuk functions $\phi_n(k)$
\begin{align*}
\phi_n(k) = 2^n \binom{N}{n}^{-1/2} \sqrt{\Pi(k)}\, K_n(k), \qquad \Pi(k) = \binom{N}{k} 2^{-N},
\end{align*}
and introduce the continuous variable $x$ via the change of coordinates $k = \sqrt{N/2}\, x + N/2$, which centers the lattice at $N/2$ and rescales its spacing to $\sqrt{2/N}$ so that, as $N \to \infty$, the discrete points $k \in X_N$ densely fill the real line. Under this scaling, the binomial weight $\Pi(k)$ approximate to $\sim\; \sqrt{\tfrac{2}{\pi N}}\, e^{-x^2}$.
So $\sqrt{\Pi(k)}$ contributes the Gaussian factor $e^{-x^2/2}$ characteristic of the Hermite functions.

The polynomial part behaves analogously. With the normalization $2^n \binom{N}{n}^{-1/2}$ chosen precisely to cancel the leading $N$-dependence of $K_n(k)$, leading to
\begin{align*}
2^n \binom{N}{n}^{-1/2} K_n(k) \rightarrow \frac{H_n(x)}{\sqrt{2^n n!}},
\end{align*}
where $H_n$ is the $n$-th Hermite polynomial. Combining the Gaussian factor from the weight with the Hermite polynomial from the rescaled $K_n$, we obtain
\begin{align*}
\phi_n(k) \rightarrow \psi_n(x) := \pi^{-1/4} (2^n n!)^{-1/2}\, H_n(x)\, e^{-x^2/2},
\end{align*}
which are exactly the standard Hermite functions $\psi_n$ on $\mathbb{R}$. Thus the discrete orthonormal basis $\{\phi_n\}$ on $X_N$ degenerates, in the large-$N$ limit, to the continuous orthonormal Hermite basis $\{\psi_n\}$ on $\mathbb{R}$.

\section{Kravchuk Oscillator and SU(2)}\label{sec:KO}

The connection between $\mathfrak{so}(3) \cong \mathfrak{su}(2)$ and the Kravchuk oscillator was first highlighted in \cite{atakishiyevaKravchukOscillatorRevisited2014}. Let $N$ be some nonnegative integer. Define the following three $(N+1) \times (N+1)$ matrices 
\[
\hat{K} = \frac{1}{2}
    \begin{pmatrix}
        0            & -\beta_1 & 0             & \cdots & 0             & 0             \\
        -\beta_1 & 0            & -\beta_2 & \cdots & 0             & 0             \\
        0            & -\beta_2 & 0             & \cdots & 0             & 0             \\
        \vdots       & \vdots       & \vdots        & \ddots & \vdots        & \vdots        \\
        0            & 0            & 0             & \cdots & 0             & -\beta_N \\
        0            & 0            & 0             & \cdots & -\beta_N & 0
    \end{pmatrix}, \quad 
\]

It arises by reinterpreting
the bilinear generators \eqref{eq:JS-generators} as the dynamical observables
of a \emph{finite} one-dimensional oscillator on the $(N+1)$-point lattice.
To make this explicit, we consider the $(N+1)$ dimensional Kravchuk Oscillator Hamiltonian $\hat{K}$ with its eigenfunctions $\phi_n(k)$ defined on the grid $X_N = \{0, 1, \dots N\}$
\begin{align}
    \hat{K}\ket{\phi_n} = \left(n-\tfrac{N}{2}\right)\ket{\phi_n}
\end{align}
this operator and function $\phi_n$ were introduced in \Cref{def:kravchuk_function} and can be represented by the matrix and vector 
\begin{align}
    \hat{K} = \frac{1}{2}
    \begin{pmatrix}
        0            & -\beta_1 & 0             & \cdots & 0             & 0             \\
        -\beta_1 & 0            & -\beta_2 & \cdots & 0             & 0             \\
        0            & -\beta_2 & 0             & \cdots & 0             & 0             \\
        \vdots       & \vdots       & \vdots        & \ddots & \vdots        & \vdots        \\
        0            & 0            & 0             & \cdots & 0             & -\beta_N \\
        0            & 0            & 0             & \cdots & -\beta_N & 0
    \end{pmatrix}, \qquad
    \ket{\phi_n} =
    \begin{pmatrix}
        \phi_n(0)  \\
        \phi_n(1)  \\
        \phi_n(2)  \\
        \vdots     \\
        \phi_n(N)
    \end{pmatrix}
\end{align}
where $\beta$ are defined in \Cref{eq:A-matrix}. It is the finite analogue of the continuous Harmonic Oscillator \Cref{eq:QHO}. Since, this behaves like a Hamiltonian we can further define the discrete position operator which acts on $\ket{k}$ as
\begin{align}
    \hat{X}\ket{k} = \Bigl(k - \tfrac{N}{2}\Bigr)\ket{k},
\end{align}
this is represented by a diagonal matrix in discrete position basis $\ket{k}$. Note that this operator is shifted by $N/2$ on the position grid as compared to the original $X_N = \{0, 1,\dots, N\}$, this makes our mapping cleaner by not involving diagonal $(N/2) I$ terms. The momentum operator analogue may equivalently be obtained from the commutator 
\begin{align}
    \hat{P} = i [\hat{X}, \hat{K}]
\end{align}
where $i$ ensures that the operator is self-adjoint. This is precisely the discrete analogue of the continuous momentum operator $-i\frac{\partial}{\partial x}$. 

We interpret the generators $\{\hat{S}, \hat{A}, \hat{D}\}$ of
\eqref{eq:JS-generators} directly as the dynamical observables of the finite
oscillator. The off-diagonal generator $\hat{S}$ is tridiagonal in the fock basis and plays the role of the Hamiltonian, while the diagonal generator $\hat{D}$ has the symmetric spectrum $\{-\tfrac{N}{2}, \dots, \tfrac{N}{2}\}$
and plays the role of position, the map becomes
\begin{align}\label{eq:kravchuk-observables}
    \hat{K} \;:=\; -\bar{S}, \qquad
    \hat{X} \;:=\; \bar{D}, \qquad
    \hat{P} \;:=\; \bar{A}.
\end{align}
Since all three observables are $\mathfrak{su}(2)$ generators, they close into the algebra exactly:
\begin{align}\label{eq:kravchuk-commutators}
    [\hat{X}, \hat{P}] = i\,\hat{K}, \qquad
    [\hat{K}, \hat{X}] = i\,\hat{P}, \qquad
    [\hat{K}, \hat{P}] = -i\,\hat{X}.
\end{align}
The last two are the discrete analogues of the Hamilton equations
$\dot{x} = p$, $\dot{p} = -x$ for the continuous harmonic oscillator. In the
contraction limit $N \to \infty$ with appropriate rescaling,
\eqref{eq:kravchuk-commutators} reduces to the Heisenberg algebra and \eqref{eq:kravchuk-observables} recovers the canonical number,  position and momentum operator of the standard quantum harmonic oscillator \cite{atakishiyevaKravchukOscillatorRevisited2014}. The eigenvalue equation
\begin{equation}
    \hat{K}\ket{\phi_n} = \Bigl(n - \tfrac{N}{2}\Bigr)\ket{\phi_n},
    \qquad n \in \{0, 1, \dots, N\},
\end{equation}
produces a finite, equally-spaced spectrum, and the eigenstates $\ket{\phi_n}$
are precisely the eigenstates of the generator $\hat{S}$. The position
eigenstates $\ket{k}$, with lattice index $k \in \{0, 1, \dots, N\}$, satisfy
\begin{equation}
    \hat{X}\ket{k} = \Bigl(k - \tfrac{N}{2}\Bigr)\ket{k},
\end{equation}
and the overlaps with the energy eigenstates give back the Kravchuk functions defined earlier
\begin{align}
    \phi_n(k) := \braket{k | \phi_n}
    = \sqrt{\binom{N}{n}\, \left(\frac{1}{2}\right)^N}\;
      K_n(k),
\end{align}
where $K_n(k)$ is the Kravchuk polynomial of degree $n$ in the variable $k$. The states
$\{\ket{\phi_n}\}_{n=0}^{N}$ form the orthonormal Kravchuk basis of $\mathcal{H}_N \cong \mathbb{C}^{N+1}$. This readily shows the connection between the Kravchuk Oscillator $\hat{K}$ and the $\mathfrak{su}(2)$ algebra generator. Starting from the Kravchuk hamiltonian, position and momentum, we show that they satisfy the $\mathfrak{su}(2)$ algebra commutators and hence the dynamics of Kravchuk Oscillator can be exactly obtained by the $\mathfrak{su}(2)$ evolution. We make this idea concrete by proving that the Kravchuk Transform corresponds to the Hamiltonian evolution of the operator $\bar{S}$ (a $\mathfrak{su}(2)$ irrep) from the mapping to $\hat{K}$ in \Cref{eq:kravchuk-observables}.
The following is a corollary for \Cref{thm:Kravchuk_transform}, which summarizes the properties of Kravchuk Transform analogous to Continuous Fourier Transform.
\begin{corr}[Properties of Kravchuk Transform, consolidated from \cite{atakishiyev1997fractional,cotfas2016linearrepresentationssu2described,Atakishiyev1999ContinuousVD}]    
The Kravchuk Transform $\mathcal{K}$ satisfies the following:
\begin{enumerate}
    \item \textbf{(Diagonalization in the Kravchuk basis)} In the position
    basis $\{\ket{k}\}_{k=0}^{N}$, the matrix elements of
    $\mathcal{K}$ are given by the Kravchuk functions:
    \begin{equation}
        \braket{k' | \mathcal{K} | k}
        = e^{-i(\pi/4)N} \phi_{k'}(k),
    \end{equation}
    where $\phi_n(k)$ is defined in \Cref{eq:kravchuk-function}.
    \item \textbf{(Canonical exchange)} The operator $\mathcal{K}$ implements the discrete analogue of the Fourier exchange of position and momentum,
    \begin{align}
    \mathcal{K}^{\dagger}\,\hat{X} \, \mathcal{K}
         = -\hat{P}, \qquad
    \mathcal{K}^{\dagger}\,\hat{P} \, \mathcal{K}
        = \hat{X}.
    \end{align}
    \item \textbf{(Periodicity)} The phase factor $e^{-i(\pi/4)N}$ ensures $\mathcal{K}^{4} = I$, in direct analogy with the continuous Fourier transform, $\mathcal{F}^4 = I$.
    \item \textbf{(Continuum limit)} Let $j=N/2$, in the limit $j \to \infty$ with the
    contraction $\hat{X}/\sqrt{j} \to \hat{x}$,
    $\hat{P}/\sqrt{j} \to \hat{p}$, the operator $\mathcal{K}$
    converges to the standard Fourier transform on $L^2(\mathbb{R})$.
\end{enumerate}
\end{corr}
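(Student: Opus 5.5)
We would organize the proof around the spectral form of $\mathcal{K}$ rather than the matrix form. By \Cref{remark:A-to-hatK} and \Cref{lm:kt_in_Fock}, we have $\mathcal{K} = e^{-i\pi N/4} e^{-i\pi \hat{K}/2} = e^{-i\pi N/4} e^{i\pi \bar{S}/2}$. \Cref{lem:kravchuk-oscillator} gives $\hat{K}\phi_n = (n-N/2)\phi_n$, and the $\phi_n$ form an orthonormal basis. Combining these, $\mathcal{K}\ket{\phi_n} = e^{-i\pi N/4}e^{-i\pi(n-N/2)/2}\ket{\phi_n} = e^{-i\pi n/2}\ket{\phi_n}$, so
\[
\mathcal{K} = \sum_{n=0}^N e^{-i\pi n/2}\ket{\phi_n}\bra{\phi_n}.
\]
This spectral decomposition is the backbone of the argument.

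\textbf{Item 1.} We would read off matrix elements in two ways: from the decomposition above, and from the factorization $\mathcal{K}=e^{-i\pi N/4}D^*LD$ of \Cref{thm:Kravchuk_transform}. The latter gives $\braket{k'|\mathcal{K}|k} = e^{-i\pi N/4}e^{i\pi(k-k')/2}\phi_{k'}(k)$. The stated formula then holds up to the diagonal phase conjugation by $D$, which we would make explicit. The key step is bookkeeping these phases carefully, possibly using the parity relation \Cref{eq:symmetry}, $\phi_n(N-k)=(-1)^n\phi_n(k)$. This is where we expect a convention mismatch, so we would state the identity precisely modulo $D$.

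\textbf{Item 3.} This is immediate from the decomposition: $\mathcal{K}^4=\sum_n e^{-2\pi i n}\ket{\phi_n}\bra{\phi_n} = I$. The role of the global phase $e^{-i\pi N/4}$ is exactly to cancel the shift $N/2$ in the spectrum of $\hat{K}$. For odd $N$ this shift is half-integral, and without the phase we would only get $\mathcal{K}^4=-I$.

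\textbf{Item 2.} Using the identifications \Cref{eq:kravchuk-observables}, $\mathcal{K}$ is (up to global phase) the rotation $e^{i\theta\bar S}$ with $\theta=\pi/2$. We would set $X(\theta)=e^{-i\theta \bar S}\bar D e^{i\theta\bar S}$ and differentiate using the $\mathfrak{su}(2)$ commutators \Cref{eq:JS-commutators}. This gives $X'(\theta)=-i[\bar S, X(\theta)]$, and the commutators close on $\{\bar D,\bar A\}$ into a harmonic-oscillator ODE. Its solution is $X(\theta)=\bar D\cos\theta - \bar A\sin\theta$, and similarly $P(\theta)=\bar A\cos\theta+\bar D\sin\theta$. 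Evaluating at $\theta=\pi/2$ yields $\mathcal{K}^\dagger\hat X\mathcal{K}=-\hat P$ and $\mathcal{K}^\dagger\hat P\mathcal{K}=\hat X$. The global phase cancels under conjugation. We would double-check the signs against \Cref{eq:kravchuk-commutators}, since the relative sign between $\hat{K}$ and $\bar S$ flips the rotation direction.

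\textbf{Item 4.} This is the main obstacle, since it is a limit statement rather than an algebraic identity. We would embed $\ell^2(X_N)$ into $L^2(\mathbb{R})$ via the rescaling $k=\sqrt{N/2}\,x+N/2$ of \Cref{sec:k_to_h}, mapping $\phi_n$ to step-function approximants of $\psi_n$. By the appendix, these converge to $\psi_n$ in $L^2$ for each fixed $n$. On $\psi_n$ the Fourier transform acts as $e^{-i\pi n/2}$, which matches the eigenvalue of $\mathcal{K}$ on $\phi_n$. Hence $\mathcal{K}$ converges to $\mathcal{F}$ strongly on the span of finitely many Hermite functions. All operators involved are unitary, so uniformly bounded, and a density argument then extends strong convergence to all of $L^2(\mathbb{R})$. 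The contraction of $\hat X/\sqrt{j},\hat P/\sqrt{j}$ to $\hat x,\hat p$ follows by citing \cite{atakishiyevaKravchukOscillatorRevisited2014}. The hard part is upgrading the pointwise limit of the appendix to $L^2$ convergence. Our first attempt would be dominated convergence using Gaussian tail bounds on the binomial weight $\Pi(k)$.
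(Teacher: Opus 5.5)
The paper gives no proof of this corollary; it is only stated as ``consolidated from'' the cited works. Your proposal is therefore a genuinely different route, and in fact the only argument available.

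It derives Items 1--3 from a single imported fact together with \Cref{lem:kravchuk-oscillator}. That fact is the equality of the two forms of $\mathcal{K}$ in \Cref{thm:Kravchuk_transform}, which is where the nontrivial Kravchuk identity lives. This gives a checkable argument, whereas the citation route hides the error in Item 1 discussed below.

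Your spectral decomposition $\mathcal{K}=\sum_n(-i)^n\ket{\phi_n}\bra{\phi_n}$ is correct; it agrees entrywise with $e^{-i\pi N/4}D^*LD$ for $N=1,2$. Item 3 then follows exactly as you say, and your Heisenberg-picture ODE for Item 2 is right. Three comments follow.

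\textbf{Item 1 is false as written; say so instead of hedging.} No phase bookkeeping and no use of the parity relation \eqref{eq:symmetry} can rescue it.
\begin{itemize}
\item $\mathcal{K}=e^{-i\pi N/4}e^{-i\pi\hat{K}/2}$ is complex symmetric, because $\hat{K}$ is real symmetric.
\item $L$ is not symmetric for $N\ge 1$, since $\phi_0(1)=\sqrt{N}\,2^{-N/2}=-\phi_1(0)$.
\item Concretely, for $N=1$ one has $\langle 0|\mathcal{K}|1\rangle=e^{-i\pi/4}\,i/\sqrt{2}\neq e^{-i\pi/4}\phi_0(1)=e^{-i\pi/4}/\sqrt{2}$.
\end{itemize}
The correct statements, both of which your argument proves, are:
\begin{itemize}
\item $\langle k'|\mathcal{K}|k\rangle=e^{-i\pi N/4}\,i^{\,k-k'}\phi_{k'}(k)$, equivalently $D\mathcal{K}D^*=e^{-i\pi N/4}L$;
\item the genuine diagonalization $\mathcal{K}\ket{\phi_n}=(-i)^n\ket{\phi_n}$, which is what the item's title refers to.
\end{itemize}

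\textbf{Item 2: the sign check you promised does turn something up.} Take the paper's explicit definitions $\hat{X}\ket{k}=(k-N/2)\ket{k}$ and $\hat{P}=i[\hat{X},\hat{K}]$.
\begin{itemize}
\item Since the diagonal of $\bar{D}$ is $N/2-l$, you get $\hat{X}=-\bar{D}$.
\item Hence $\hat{P}=i[\bar{D},\bar{S}]=-\bar{A}$.
\end{itemize}
This is opposite to \eqref{eq:kravchuk-observables}. Because both signs flip together, the commutators \eqref{eq:kravchuk-commutators} and the exchange relations are unaffected, so your conclusion stands.

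It is cleaner to run your ODE directly on $X(\theta)=e^{i\theta\hat{K}}\hat{X}e^{-i\theta\hat{K}}$ and $P(\theta)=e^{i\theta\hat{K}}\hat{P}e^{-i\theta\hat{K}}$. Then \eqref{eq:kravchuk-commutators} gives $X'=-P$ and $P'=X$. Hence $\mathcal{K}^\dagger\hat{X}\mathcal{K}=X(\pi/2)=-\hat{P}$ and $\mathcal{K}^\dagger\hat{P}\mathcal{K}=P(\pi/2)=\hat{X}$, with no identification with $\bar{D},\bar{A}$ needed.

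\textbf{Item 4: your plan works, with two adjustments.} Let $J_N:\ell^2(X_N)\to L^2(\mathbb{R})$ send $\delta_k$ to $h^{-1/2}\mathbf{1}_{[x_k-h/2,\,x_k+h/2)}$, where $h=\sqrt{2/N}$.
\begin{itemize}
\item The compressed operators $J_N\mathcal{K}_NJ_N^*$ are contractions, not unitaries. That is all the density argument needs.
\item You need $J_N^*\psi_n-\phi_n\to 0$. This follows from $J_N\phi_n\to\psi_n$ because $J_N^*J_N=I$.
\end{itemize}
The step you flag as hard is easier than you expect. Since $\|J_N\phi_n\|=\|\psi_n\|=1$, local uniform convergence on $[-R,R]$ already controls the tails:
\[
\int_{|x|>R}|J_N\phi_n|^2=1-\int_{|x|\le R}|J_N\phi_n|^2\;\to\;\int_{|x|>R}|\psi_n|^2 .
\]
So no Gaussian tail bounds are required. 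The real input is the local limit itself, which \Cref{sec:k_to_h} only sketches. It consists of two parts:
\begin{itemize}
\item de~Moivre--Laplace for $\sqrt{\Pi(k)}$, including the factor $h^{1/2}$ that the appendix omits;
\item convergence of the rescaled $K_n$ to the Hermite polynomial.
\end{itemize}
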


\section{Proof for \Cref{lem:adder}}\label{sec:draper}
\begin{proof}
In the Fourier basis the state $\QFT\ket{a}$ carries, on each basis ket
$\ket{k}$, the relative phase $e^{2\pi i\,ak/N}$. Writing the binary expansion
$k=\sum_{j=0}^{r-1}k_j 2^{j}$, this phase factorizes across the qubits as
\[
e^{ 2\pi i a k/N}
   = \prod_{j=0}^{r-1}
       \exp \Bigl(2\pi i  a k_j 2^{j-r}\Bigr),
\]
so that qubit $j$ (with bit value $k_j$) carries the phase
$\exp(2\pi i\, a\,2^{\,j-r})$. Define $\Phi(b)$ to be the product of
single-qubit $Z$-rotations that multiplies qubit $j$ by the constant phase
\[
\exp \Bigl(2\pi i b 2^{j-r}\Bigr),
\qquad j=\{0,\dots,r-1\}.
\]
Each factor is a phase gate
$R(\theta_j)=\operatorname{diag}(1,e^{i\theta_j})$ with
$\theta_j = 2\pi b2^{j-r}$, applied independently to qubit $j$. Hence
$\Phi(b)$ is diagonal in the Fourier basis and uses only single-qubit
rotations. Acting on $\QFT\ket{a}$ it multiplies the coefficient of $\ket{k}$
by
\[
\prod_{j=0}^{r-1}\exp \bigl(2\pi i b k_j 2^{j-r}\bigr)
   = \exp\Bigl(2\pi i b \sum_{j=0}^{r-1} k_j 2^{j} \big/ N\Bigr)
   = e^{2\pi i b k/N}.
\]
Therefore the coefficient of $\ket{k}$ becomes
$\tfrac{1}{\sqrt N}e^{2\pi i (a+b)k/N}$, which is precisely
$\QFT\ket{(a+b)\bmod N}$ by periodicity of the exponential in its numerator
modulo $N$. Applying $\IQFT$ returns the computational-basis state
$\ket{(a+b)\bmod N}$.
\end{proof}
\end{document}